\numberwithin{equation}{section}
\newtheorem{theorem}{Theorem}
\newtheorem{proposition}[theorem]{Proposition}
\newtheorem{remark}[theorem]{Remark}
\def\XC{\mathcal{X}}
\def\M{\mathbf{M}}
\def\R{\mathbf{R}}
\def\S{\mathbf{S}}
\def\Z{\mathbf{Z}}
\def\c{\mathbf{c}}
\def\x{\mathbf{x}}
\def\y{\mathbf{y}}
\def\1{\mathbf{1}}
\def\z{\mathbf{z}}
\def\al{\alpha}
\def\be{\beta}
\def\pa{\partial}
\def\ep{\epsilon}
\def\de{\delta}
\newcommand{\la}{\lambda}
\newcommand{\om}{\omega}
\newcommand{\De}{\Delta}
\newcommand{\Om}{\Omega}
\begin{document}
%\newtheorem{proposition}{Proposition}
%\newtheorem{remark}{Remark}

%*******************************************************************
%TITLE OF THE ARTICLE
%*******************************************************************
\vspace*{2cm}\centerline{\Large \bf Generalized Mass Action Law and}

 \vspace{0.2cm}
\centerline{\Large \bf Thermodynamics of Nonlinear Markov Processes} \vspace*{1cm}
%*******************************************************************
%AUTHORS - THE CORRESPONDING AUTHOR NEEDS TO SPECIFY HIS/HER E-MAIL ADDRESS AS A FOOTNOTE
%*******************************************************************

\centerline{\bf A.N. Gorban$^a$\footnote{Corresponding author. E-mail: ag153@le.ac.uk},
V.N. Kolokoltsov$^b$}

\vspace{0.5cm}

%*******************************************************************
%ADDRESS OF THE AUTHORS
%*******************************************************************

\vspace*{0.5cm}

%*******************************************************************
%ADDRESS OF THE AUTHORS
%*******************************************************************

\centerline{$^a$Department of Mathematics, University of Leicester, Leicester, LE1 7RH,
UK}

\centerline{$^b$Department of Statistics, University of Warwick,
 Coventry, CV4 7AL, UK}

%*******************************************************************
%ABSTRACT
%*******************************************************************

\vspace*{1cm}

\noindent {\bf Abstract.}The  nonlinear Markov processes are the measure-valued dynamical
systems which preserve positivity. They can be represented as the law of large numbers
limits of general Markov models of interacting particles. In physics, the kinetic
equations allow Lyapunov functionals (entropy, free energy, etc.). This may be considered
as a sort of inheritance of the Lyapunov functionals from the microscopic master
equations. We study nonlinear Markov processes that inherit thermodynamic properties from
the microscopic linear Markov processes. We develop the thermodynamics of nonlinear
Markov processes and analyze the asymptotic assumption, which are sufficient for this
inheritance.

%*******************************************************************
%KEYWORDS
%*******************************************************************
\noindent {\bf Key words:}{Markov process; nonlinear kinetics; Lyapunov functional;
entropy; quasiequilibrium; quasi steady state}

% these are examples, put your key words

%*******************************************************************
%AMS SUBJECT CLASSIFICATION
%*******************************************************************
\noindent {\bf AMS subject classification:}{ 80A3; 60J25; 60J60; 60J75; 82B40}

% these are examples, put your numbers

%*******************************************************************
%ARTICLE BODY
%*******************************************************************

\tableofcontents
\newpage

%*******************************************************************
%DO NOT FORGET TO RESET THE EQUATION COUNTER TO 0 AT THE HEAD OF EACH SECTION
%*******************************************************************

\section{Introduction}

\subsection{What is the proper nonlinear generalization of the Markov processes?}

First order kinetics (the Kolmogorov--Chapman or master equation) is used in defining of
nonlinear kinetic equations: the microscopic dynamics is replaced by Markov processes and
then the large linear system is reduced to a nonlinear kinetics of some moments with
referring to the law of large numbers for the stochastic evolution. This approach became
very popular after the works of Kac \cite{Kac1956} and Prigogine and Balescu
\cite{Prigogine1959}. In this sense, the Markov processes serve as a source of nonlinear
kinetics. The stochastic simulation of chemical reactions \cite{Gillespie1976} made the
master equation approach to kinetics more popular in many applications.

At the same time, master equation is considered as a simplest kinetic equation because it
typically defines a contraction semigroup. If we consider, for example, Markov
transitions between a finite number of states, $A_i \to A_j$, then the probability
distribution relaxes exponentially  to an equilibrium and if the digraph of transitions
is connected then the normalized equilibrium distribution is unique. On contrary, the
interaction between states may produce nonlinear kinetic equations with various
non-trivial dynamic effects. For example, if we write for two states (`rabbits' and
`foxes') $\mbox{rabbit}\to 2\, \mbox{rabbits}$, $\mbox{fox}+\mbox{rabbit}\to
(1+a)\,\mbox{foxes}$ (the interaction step), and $\mbox{fox}\to \emptyset$, and apply the
standard mass action law then we get the predator--pray Lotka--Volterra system with
oscillations.

The classical {\em  mass action law (MAL)} systems are dense among the differential
equations which preserve positivity (different versions of this theorem are proven in
\cite{Korzukhin1967,GorbanBykYabEssays1986}, see also discussion in \cite{Kowalski1993}).
Therefore, if we aim to consider a general class of kinetic equations which includes the
MAL systems then the only important restriction is preservation of positivity. On this
way we approach the theory of nonlinear Markov processes \cite{BeKo,Ko10}.

In general spaces of states, the  nonlinear Markov processes are the measure-valued
dynamical systems which preserve positivity. They can be represented as the law of large
numbers limits of general Markov models of interacting particles. The sensitivity
analysis for these nonlinear evolution equations, that is the systematic study of the
smooth dependence on the initial conditions and other parameters via the study of
linearized system around a solution were performed in \cite{Ko10,Ko06,Ko07}.

Linear Markov chains have many Lyapunov functionals. For a finite chain with equilibrium
distribution $P^*=(p_i^*)$ they have the form
\begin{equation}
H_h(P\|P^*)=\sum_i p_i^* h\left(\frac{p_i}{p_i^*}\right),
\end{equation}
where $P=(p_i)$ is the current distribution and $h$ is an arbitrary convex function on
the positive semi-axis. These functionals were discovered by R\'enyi in 1960
\cite{Renyi1961} and studied further by Csisz\'ar \cite{Csiszar1963}, Morimoto
\cite{Morimoto1963} and many other authors (see review in \cite{GorGorJudge2010}). The
functions $H_h(P(t)\|P^*)$ monotonically decrease (non-increasing) with time on the
solutions $P(t)$ of the corresponding master equations. Proposition \ref{MorimotoHtheoremext}
of Appendix extends the Morimoto result to continuous state models.

In physics, the kinetic equations allow Lyapunov functionals (entropy, free energy,
etc.). This may be considered as a sort of inheritance of the Lyapunov functionals from
the microscopic master equations. In this paper, we study nonlinear Markov processes that
inherit thermodynamic properties from the microscopic linear Markov processes. We develop
the thermodynamics of nonlinear Markov processes and analyze the asymptotic assumption,
which are sufficient for this inheritance.

\subsection{Preliminaries: MAL, detailed balance and $H$-theorems}
The classical thermodynamics follows the Clausius laws \cite{Clausius1865}
\begin{enumerate}
\item{The energy of the Universe is constant.}
\item{The entropy of the Universe tends to a maximum.}
\end{enumerate}
In practice, we assume that the `Universe' is the minimal system, which is isolated with acceptable precision and includes the system of interest.

Kinetics is expected be concordant with the laws of thermodynamics. In physical kinetics,
Boltzmann's $H$ theorem established a link between the statistical entropy of
one-particle distribution function in gas kinetics and the thermodynamic entropy
\cite{Boltzmann1872}. Boltzmann's proof of his $H$-theorem used the principle of detailed
balance: At equilibrium, each collision is equilibrated by the reverse collision. This
principle is based on the {\em microscopic reversibility}: the Newton equation of motion
for particles are invariant with respect to a time reversal and a the space inversion
transformations. Five years before Boltzmann, Maxwell considered detailed balance as a
consequence of the principle of sufficient reason \cite{Maxwell1867}. Later on, this
principle was declared as a new fundamental law \cite{Lewis1925}. For modern proofs and
refutations of detailed balance we refer to \cite{GorbanResPhys2014}.

After Boltzmann, new kinetic equations in physics are always to be tested for concordance
with the laws of thermodynamics. Many particular $H$-theorems have been proved for
various classes of kinetic equations. The principle of detailed balance has been widely
used in these proofs. For MAL with detailed balance, the $H$-function and the entropy
production formula are very similar to the Boltzmann equation with detailed balance. Let
$A_1, \ldots , A_n$ be the components. For any set of non-negative numbers $\alpha_{\rho
i}, \, \beta_{\rho i}\geq 0$ ($i=1,\ldots , n$, $\rho=1, \ldots , m$) a reversible
reaction mechanism is given by the system of formal equations:
\begin{equation}\label{reversible mechanism}
\alpha_{\rho 1}A_1+ \ldots + \alpha_{\rho n} A_n \rightleftharpoons \beta_{\rho 1}A_1+ \ldots + \beta_{\rho n} A_n \, .
\end{equation}
According to the principle of detailed balance, each reaction has an inverse one and we join them in one reversible reaction.
A non-negative real variable, concentration $c_i$, is associated with each component $A_i$, two positive constants, rate constants $k^{\pm}_{rho}$ are associated with  each elementary reaction and reaction rates are defined as
\begin{equation}\label{reaction rates}
r^+_{\rho}=k^+_{\rho}\prod_{i=1}^n c_i^{\alpha_{\rho i}},\;\;r^-_{\rho}=k^-_{\rho}\prod_{i=1}^n c_i^{\beta_{\rho i}}, \;\; r_{\rho}=r^+_{\rho}-r^-_{\rho} \, .
\end{equation}
The reaction kinetics MAL equations are
\begin{equation}\label{MAL reversible equation}
\frac{d c}{dt}=\sum_{\rho}\gamma_{\rho} r_{\rho},
\end{equation}
where $c$ is the vector of concentrations with coordinates $c_i$ and $\gamma_{\rho}$ is the stoichiometric vector of the elementary reaction, $\gamma_{\rho i}=\beta_{\rho i}-\alpha_{\rho i}$ (gain minus loss).

The principle of detailed balance for the MAL kinetics means that $k^{\pm}_{\rho}>0$ and
there exists a positive point of detailed balance $c^*$ ($c^*_i>0$), where
\begin{equation}\label{MAL detailed balance}
r^+_{\rho}(c^*)=r^-_{\rho}(c^*) \,(=r^*), \mbox{ i.e. }k^+_{\rho}\prod_{i=1}^n (c_i^*)^{\alpha_{\rho i}}=k^-_{\rho}\prod_{i=1}^n (c_i^*)^{\beta_{\rho i}}=r^*\, .
\end{equation}
For a given positive point of detailed balance, $c^*$, the reaction rates include $m$
independent positive constants, equilibrium fluxes $r^*_{\rho}$, instead of $2m$ rate
constants $k^{\pm}_{\rho}$:
\begin{equation}\label{MALdbReactRates}
r^+_{\rho}(c)=r^*_{\rho}\prod_{i=1}^n \left(\frac{c_i}{c_i^*}\right)^{\alpha_{\rho i}},
\;\; r^-_{\rho}(c)=r^*_{\rho}\prod_{i=1}^n \left(\frac{c_i}{c_i^*}\right)^{\beta_{\rho
i}}.
\end{equation}
$H$-theorem for MAL kinetics with detailed balance is similar to Boltzmann's $H$-theorem.
Take
\begin{equation}\label{H-chemical}
H(c)=\sum_i c_i \left(\ln\left(\frac{c_i}{c_i^*}\right)- 1\right).
\end{equation}
Simple calculation gives that for the kinetic equations (\ref{MAL reversible equation})
with reaction rate functions (\ref{MALdbReactRates})
\begin{equation}\label{ChemicalEntropyProduction}
\frac{d H}{dt}=-\sum_{\rho}(r^+_{\rho}(c)-r^-_{\rho}(c))(\ln r^+_{\rho}(c)-\ln
r^-_{\rho}(c))\leq 0
\end{equation}
and $$\frac{d H}{dt}=0\mbox{ if and only if }r^+_{\rho}(c)=r^-_{\rho}(c) \mbox{ for all }
\rho $$ because $(x-y)(\ln x - \ln y)\geq 0$ for all positive $x,y$ and it is zero if and
only if $x=y$. Hence, if there exists a positive point of detailed balance than $H(c)$
decreases monotonically in time and all the equilibria are the points of detailed balance
\cite{VolpertKhudyaev1985}.

Physically, the constructed equations correspond to chemical reactions in a system with
constant volume and temperature. For other classical conditions (isobaric systems,
isolated systems, etc, the Lyapunov functionals are also known (see, for example,
\cite{Hangos2010,Yablonskiiatal1991}).

For many real  systems the reaction mechanism includes both reversible and irreversible
reactions. For them some reverse reactions are absent in the reaction mechanism
(\ref{reversible mechanism}). (It is convenient to use such notations that all direct
reactions are present and some reverse reactions are absent). The systems with
irreversible reactions which are the limits of the fully reversible systems with detailed
balance when some of the equilibrium concentrations tend to zero are described
\cite{GorbanMirkesYablonsky2013,GorbYabCES2012}. If the reversible systems obey the
principle of detailed balance then the limit system with some irreversible reactions must
satisfy the {\em extended principle of detailed balance}. It is proven in the form of two
conditions: (i) the reversible part satisfies the principle of detailed balance and (ii)
the convex hull of the stoichiometric vectors of the irreversible reactions does not
intersect the linear span of the stoichiometric vectors of the reversible reactions.
These conditions imply the existence of the global Lyapunov functionals and alow an
algebraic description of the limit behavior. The extended principle of detailed balance
is closely related to the Grigoriev -- Milman -- Nash theory of binomial varieties
\cite{GrigorievMilman2012}.

\subsection{Thermodynamics beyond detailed balance}

In the original form of $H$-theorem the microscopic reversibility (invariance of the
microscopic description with respect to time reversal) is used to prove the macroscopic
irreversibility, the existence of the time arrow ($H$ decreases monotonically due to
kinetic equations). Elegant paradoxical form of this reasoning leaves, nevertheless,
concern about its generality: does the macroscopic irreversibility need the microscopic
reversibility? In 1887 Lorentz formulated this concern explicitly. He stated that the
collisions of polyatomic molecules are irreversible and, therefore, Boltzmann's
$H$-theorem is not applicable to the polyatomic media \cite{Lorentz1887}. Boltzmann found
the solution immediately and invented what we call now semidetailed balance or cyclic
balance or complex balance \cite{Boltzmann1887}. For the Boltzmann equation this new
condition allows a nice schematic representation (see Figure~\ref{scheme1} for detailed
balance and Figure~\ref{ChemeComBal} for complex balance). Now, it is proven that the
Lorentz objections were wrong and the detailed balance conditions hold for polyatomic
molecules \cite{CercignaniLampis1981}. Nevertheless, this discussion was seminal and
stimulated Boltzmann to discover new general conditions of thermodynamic behavior.
\begin{figure}[h]
\centering{
\includegraphics[height=0.12\textwidth]{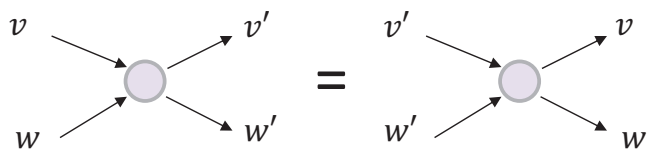}
}\caption{Schematic representation of detailed balance for collisions. The four-tail
scheme represents intensity of the equilibrium flux of collisions with given
velocities.\label{scheme1}}
\end{figure}
\begin{figure}[h]
\centering{
\includegraphics[height=0.12 \textwidth]{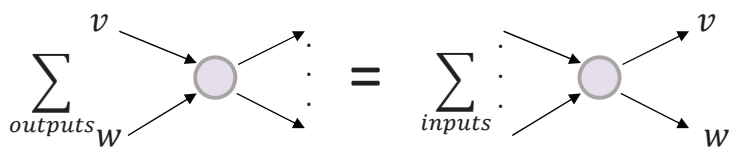}
}\caption{\label{ChemeComBal}Boltzmann's  cyclic  balance is a summarised detailed
balance condition: at equilibrium the sum of intensities of collisions with a given input
$v+w\to \ldots$ coincides with the sum (or integral) of intensities of collisions with
the same output $\ldots \to v+w$.}
\end{figure}

For the MAL kinetics, the Boltzmann cyclic balance condition was rediscovered in 1972
\cite{HornJackson1972}. It got the name ``complex balance'' (balance of complexes). The
complex balance condition has the form of summarised detailed balance
(Figure~\ref{ChemeComBal}).

Consider a reaction mechanism in the form
\begin{equation}\label{irreversible mechanism}
\alpha_{\rho 1}A_1+ \ldots + \alpha_{\rho n} A_n \to \beta_{\rho 1}A_1+ \ldots +
\beta_{\rho n} A_n \, .
\end{equation}
Here, the reverse reactions if they exist participate separately from the direct
reactions and reversibility is not compulsory. This form is convenient for systems
without detailed balance. The MAL reaction rate is $r_{\rho}=k_{\rho}\prod_i
c_i^{\alpha_i}$ and the kinetic equations have again the form (\ref{MAL reversible
equation}).

A positive concentration vector $c^*$ is an equilibrium if $\sum_{\rho}\gamma_{\rho} r^*_{\rho}=0$, where
$r_{\rho}=k_{\rho}\prod_i (c_i^*)^{\alpha_{\rho i}}$. Complexes are the formal sums in the left and right hand sides of
(\ref{irreversible mechanism}). There are $2m$ vectors of coefficients $\alpha_{\rho}=(\alpha_{\rho i})$ and $\beta_{\rho}=(\beta_{\rho i})$
($\rho=1,\ldots, m$). Some of them might coincide. Let $\{y_1, \ldots, y_q\}$ be the distinct coefficient vectors:
for each $y_j$ there exists such $\rho$ that $y_j=\alpha_{\rho}$ or $y_j=\beta_{\rho}$, and for each $\rho $,
$\alpha_{\rho}$ there exists such $j,l$ that $y_j=\alpha_{\rho}$  and $y_l=\beta_{\rho}$.

A positive point $c^*$ is a point of complex balance if for each $y_j$
\begin{equation}\label{complex_balance}
\sum_{\rho, y_j=\alpha_{\rho}}r^*_{\rho}=\sum_{\rho, y_j=\beta_{\rho}}r^*_{\rho} \;\; (j=1, \ldots, q).
\end{equation}
This is exactly the summarized detailed balance condition (compare it to
Figure~\ref{ChemeComBal}). The complex balance conditions (\ref{complex_balance}) are
sufficient for the $H$-theorem: $dH/dt \leq 0$. To demonstrate this inequality, we
consider the deformed stoichiometric mechanism with the stoichiometric vectors which
depend on parameter $\lambda\in[0,1]$:
$$\acute{\alpha}_{\rho}(\lambda)=\lambda \alpha_{\rho}+(1-\lambda)\beta_{\rho}\, , \;
\acute{\beta}_{\rho}(\lambda)=\lambda \beta_{\rho}+(1-\lambda)\alpha_{\rho}.$$ Introduce
an auxiliary function $\theta(\lambda)$, that is the sum of the reaction rates of the
deformed mechanism (with the same equilibrium fluxes). For a given concentration vector
$c$
$$\theta(\lambda)=\sum_{\rho} r^*_{\rho}\prod_{i=1}^n \left(\frac{c_i}{c_i^*}\right)^{\acute{\alpha} (\lambda)_{\rho i}}. $$
Simple calculation gives
$$ \frac{d H}{dt}=-\left.\frac{d\theta}{d \lambda}\right|_{\lambda=1} .$$
The function $\theta(\lambda)$ is convex and the complex balance conditions
(\ref{complex_balance}) imply $\theta(0)=\theta(1)$, therefore under these conditions
$\theta'(1)\geq 0$ and $H$ monotonically decreases in time.

For first order kinetics (continuous time Markov chains or master equation) the complex
balance conditions are just the stationarity conditions (the so-called balance equations)
and hold at every positive equilibrium. This gives immediately the $H$-theorem for first
order kinetics with positive equilibrium $c^*$ and without any additional conditions.

\subsection{From Markov kinetics to MAL with complex balance condition}

The semidetailed balance conditions (Figure~\ref{ChemeComBal}) for Boltzmann's equation
were produced by Stueckelberg \cite{Stueckelberg1952} from the Markov model of collisions
(Stueckelberg used the $S$-matrix notations and presented the balance equation as the
unitarity condition).

MAL for catalytic reactions with a priori unknown kinetic law was obtained in the famous
work of Michaelis and Menten \cite{MichaelisMenten1913}. They postulated that substrates
form complexes (`compounds') with enzymes, these compounds are in equilibrium with
enzymes (fast equilibria), and the concentrations of the compounds is small. The
compound-substrates equilibria can be described by equilibrium thermodynamics and the
kinetics of the compounds transformations is just a Markov chain (because for very small
concentrations of reagents only first order reactions survive). These asymptotic
assumptions lead to MAL. Michaelis and Menten studied very simple reaction, therefore the
additional relations between reaction rate constants did not appear but the Stueckelberg
approach extended to the general reaction kinetics gives the semidetailed balance
(complex balance) condition \cite{GorbanShahzad2011}.

The asymptotic assumptions of the Michaelis--Menten--Stueckelberg works are illustrated
by Figure~\ref{MMSlimit}. Each complex ($\sum_i \alpha_{\rho i}A_i$ or $\sum_i
\beta_{\rho i}A_i$) is associated with its compound ($B_{\rho}^{\pm}$ in Figure). It is
assumed that the complex is in fast equilibrium with its compound and the concentration
of compounds can be found by conditional minimization of a thermodynamic potential (under
isothermal isochoric condition this is the free energy). The second asymptotic assumption
means that the concentrations of compounds are small with respect to the concentration of
reagents. This condition allows us to find the concentrations of different compounds
independently. For the perfect free energy $F=constant\times H$ with $H$ given by
(\ref{H-chemical}) these concentrations might be found explicitly.

It should be stressed that the fast equilibrium assumption was later eliminated from
enzyme kinetics by Briggs and Haldane \cite{BriggsHaldane1925} and what is often called
the Michaelis--Menten kinetics is the Briggs--Haldane kinetics (for the modern analysis
of this system we refer to \cite{Segel89}). Nevertheless, the idea of intermediate
complexes which are in fast equilibria with stable reagents is crucially important for
production of dynamic MAL from thermodynamics.  This idea was reanimated and
systematically used in the theory of the activated complex and reaction rates
\cite{Eyring1935,Eyring1936,LaidlerTweedale2007}. Therefore, the
Michaelis--Menten--Stueckelberg limit (Figure~\ref{MMSlimit}) may be called the
Michaelis--Menten--Stueckelberg--Eyring limit.

\begin{figure}[h]
\centering{
\includegraphics[width=0.5 \textwidth]{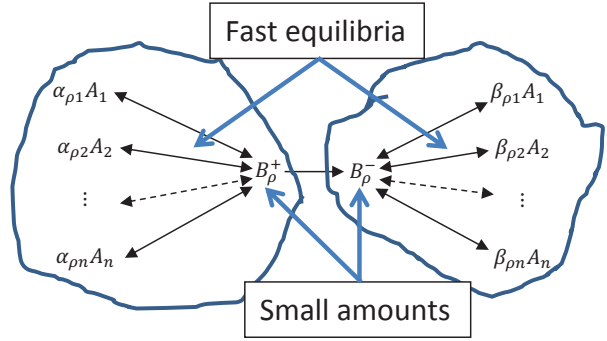}
}\caption{\label{MMSlimit}The Michaelis--Menten--Stueckelberg limit.}
\end{figure}

In our work, we study the Michaelis--Menten--Stueckelberg limit of Markov processes with
general space of states and obtain for them the {\em generalized MAL (GMAL)} with
thermodynamic properties.

\section{Derivation of MAL and GMAL on the arbitrary state space}

\subsection{Thermodynamics of particles}

To set a scenery suppose a species or a particle can be represented by a point $x$ in a locally compact metric space $X$
with some fixed Radon measure $M(dx)$. The distribution of (possibly infinitely many) particles in $X$
can be specified by a finite measure. We shall deal only with distributions that have densities
(concentrations) $c(x) \in L^1(M)$ with respect to $M$.

Let the thermodynamic properties of a concentration $c$ be characterized by the 'free
energy', which is given generally by a functional $F(c(.))$ defined on $L^1(M)$ (or some
its subspace, the domain of $F$). We shall assume that $F$ is smooth in the sense that
the variational derivative $\de F/\de c$ (for positive $c$) exists with respect to $M$,
defined by the equation
\[
\left.\frac{d}{dh}\right|_{h=0} F(c+h\om)=\int_X \frac{\de F}{\de c(x)}\om (x) M(dx)
\]
for $c,\om$ from the domain of $F$.

For many applications, $\de F/\de c$ has a logarithmic singularity as $c\to 0$. Further
on we assume the positivity of $c$ when it is necessary.

Two basic examples that cover all known physical models should be kept in mind. In the first one
\begin{equation}
\label{eqfreeengen1}
F(c(.))=\int_X \psi (c(x)) M(dx)
\end{equation}
with a function $\psi$, which is smooth on positive arguments. In this case
\[
\frac{\de F}{\de c(x)}=\psi ' (c(x)).
\]
This includes the case of finite $X=\{1, \cdots , k\}$ with
\[
F(c_1, \cdots , c_k)=\sum_j \psi_j(c_j).
\]

As another particular case let us mention the standard perfect gas free energy
given by
\begin{equation}
\label{eqfreeengen11} F(c(.))=\int_{\R^d} c(x) \left(\ln \frac{c(x)}{c^*(x)}-1\right) dx,
\quad \frac{\de F}{\de c(x)}=\ln \frac{c(x)}{c^*(x)}
\end{equation}
with some equilibrium distribution $c^*$ (we omit here the constant factors).

Of interest is also its more general version,
so-called 'generalized entropy' function (with inverted sign)
\begin{equation}
\label{eqdefgenentfu}
H_h(c \| c^*)=\int  c^*(x) h\left( \frac{c (x)}{c^*(x)}\right) M(dx),
\end{equation}
where $h$ is any convex smooth function on $\R_+=\{x>0\}$. Function \eqref{eqfreeengen11} is obtained from \eqref{eqdefgenentfu}
for $h(x)=x(\ln x-1)$. Choosing $h(x)=-\ln x$ leads to the so-called Burg relative entropy
\begin{equation}
\label{eqdefgenentfuBurg}
H_h(c \| c^*)=-\int  c^*(x) \ln \frac{c (x)}{c^*(x)} M(dx).
\end{equation}

In the second example
$X=\R^d \times \{1, \cdots , k\}$ with $M(dx)$ being Lebesgue measure on each component (more generally,
instead of $\R^d$ one can use a manifold, but we shall stick to $\R^d$ for simplicity). With some abuse of
notation we shall denote the elements of $X$ by a pair $(x, j)$, $x\in \R^d$, $j=1, \cdots, k$ or sometimes by $x_j$.
The concentration $c$ becomes a vector $c=(c_1(x), \cdots , c_k(x))$ with its gradient
$\nabla c(x)=(\nabla c_1(x), \cdots , \nabla c_k(x))$, where
\[
\nabla c_j(x)=\left(\frac{\pa c_j}{\pa x^1}, \cdots , \frac{\pa c_j}{\pa x^d}\right).
\]
The free energy is specified by the equation
\begin{equation}
\label{eqfreeengen2}
F(c(.)) =\int \psi (c(x),\nabla c(x)) \, dx
\end{equation}
with some smooth function $\psi$.
This includes the case of finite $X =\{1, \cdots , k\}$ with
\begin{equation}
\label{eqfreeengen3}
F(c(.)) =\psi (c_1,\cdots , c_k).
\end{equation}
The well established particular case of \eqref{eqfreeengen2} is
\begin{equation}
\label{eqfreeengen4}
F(c(.)) =\int \left(\psi (c(x))+\frac12 |\nabla c(x)|^2\right) \, dx,
\end{equation}
used in the Chan-Hilliard model of diffusion.
The simplest version of \eqref{eqfreeengen2} is the decomposable case:
\begin{equation}
\label{eqfreeengen5}
F(c(.)) =\int \sum_{j=1}^k \psi_j (c_j(x),\nabla c_j(x)) \, dx.
\end{equation}

The variational derivative for $F$ of type \eqref{eqfreeengen2} is the standard Euler-Lagrange one:
\begin{equation}
\label{eqfreeengen6}
\frac{\de F}{\de c_j(x)}=\frac{\pa \psi}{\pa c_j}-\sum_l \frac{\pa}{\pa x^l} \frac{\pa \psi}{\pa \nabla_l c_j}.
\end{equation}

\subsection{Compounds}

Our objective is to describe the process of transformation of particles (chemical reactions, collisions, etc).
The collections of $k$ particles can be given by points in $SX^k$ and the collection of an arbitrary number of
particles in $S\XC=\cup_{j=1}^{\infty} X^j$, where $SX^k$ and $S\XC$ are the quotient-spaces
of $X^k$ and $\XC=\cup_{j=1}^{\infty} X^j$ with respect to all permutations. Symmetrical probability laws on
$X^k$ (which are uniquely defined by their projections to $SX^k$)
are called exchangeable systems of $k$ particles. With some abuse of notations we shall use the same bold face letter notations,
say $\x=(x_1, \cdots, x_k)$, both to denote the points of $X^k$ and $SX^k$.
We shall also use the notation $\XC_{\ge k}=\cup_{j\ge k}X^j$.

The main idea of the {\it intermediate state} or the {\it activation complex} assumption (that we shall adopt here)
is that any reaction changing the collection of $k$ particles $\x=(x_1, \cdots, x_k)$ to the collection
of $l$ particles $\y=(y_1, \cdots, y_k)$ is not a one step operation, but the three step one:
before the interaction is enabled the $k$ particle $\x=(x_1, \cdots, x_k)$ should form the intermediate state
$\bar \x$, which we shall call a {\it compound} of size $k$ (consisting of the same $k$ particles
$\x$), then the compound $\bar \x$ turns to the compound $\bar \y$,
which in turn can be dissolved into its components $\y$:
\begin{equation}
\label{eqcompoundreac}
\x=(x_1 \cdots , x_k) \to \bar \x \to \bar \y \to \y=(y_1, \cdots , y_l).
\end{equation}

This concept of the intermediate states allows one to speak about the distribution of compounds present
in the system. Introducing some fixed symmetric measures $M_k$ on $SX^k$ allows one to reduce attention
to distributions specified by the densities (concentrations), which, for compounds of size $k$,
are given by the symmetric functions $\zeta_k (\x) \in L^1(M_k)$.

Let us denote by $\M$ the measure on $\XC$ with the coordinates $(M=M_1, M_2, \cdots)$
and by $\mu(x, d\y)$ the stochastic kernel on $\XC$ with the coordinates $\mu_k(x,d\y)$ so that
for a function $f=(f_1,f_2, \cdots )$ on $\XC$,
\[
\int_{S\XC} f (\x) \M(d\x)=\sum_{j=1}^{\infty} \int_{SX^j} f_j(\x) M_j (d\x),
\quad \int_{S\XC} f (x,\y) \mu(x,d\y)=\sum_{j=1}^{\infty} \int_{SX^j} f_j(x,\y) \mu_j (x,d\y).
\]
The supports of measures $M_j$ specify the set of compounds that can be formed in the system.

Of course, the simplest example of the measures $M_k$ are the projections on $SX^k$ of the products
 $M(dx) \otimes \cdots \otimes M(dx)$
($k$ times), but this example is not sufficient (as we shall see below) to cover all cases of interest.
However, in order to develop a theory, some link between $M_k$ and $M$ should be made.
Our main assumption about $M_k$ will be that the projection of all $M_k$ on the one-particle states
is absolutely continuous with respect to $M$, namely
\begin{equation}
\label{eqasscompbasmaes}
M_k(dx_1 \cdots dx_k) =\frac1k M(dx_1) \mu_k(x_1, dx_2 \cdots dx_k),
\end{equation}
with a symmetric stochastic kernel $\mu_k(x, d\x)$, $x\in X, \x \in SX^{k-1}$.
  By symmetry, \eqref{eqasscompbasmaes} rewrites as
\begin{equation}
\label{eqasscompbasmaes1}
M_k(dx_1 \cdots dx_k) =\frac1k M(dx_j) \mu_k(x_j, dx_2 \cdots dx_{j-1} dx_{j+1} \cdots dx_k),
\end{equation}
for any $j$.

This assumption is crucial for the possibility to relate the concentration of particles with the concentration of compounds.
Namely, as shown in Appendix (see Proposition \ref{kernstoich}),
the kernels $\mu$ in \eqref{eqasscompbasmaes} can be chosen in such a way that
if $\zeta_k (x_1, \cdots , x_k)=\zeta_k (\x)$ is the concentration of the compounds
$\bar \x$ of size $k$, the concentration of particles involved in these compounds equals
\begin{equation}
\label{eqasscompbasmaes2}
\int_{SX^{k-1}} \zeta_k (x, x_2, \cdots , x_k) \mu_k(x, dx_2 \cdots dx_k).
\end{equation}

Notice that the coefficient $1/k$ was introduced in \eqref{eqasscompbasmaes}
to avoid any coefficients in \eqref{eqasscompbasmaes2}.
The kernels $\mu_k$ appearing in Proposition \ref{kernstoich}
will be called {\it stoichiometric kernels}, as they present natural analogs
of the stoichiometric coefficients of the theory of chemical reactions
on a finite state space.

Therefore, if $\zeta(x)$ is the concentration of free particles (not involved in the compounds), the total
concentration of particles is
\begin{equation}
\label{eqasscompbasmaes3}
c(x) =\zeta (x) + \sum_{k=2}^{\infty}\int_{SX^{k-1}} \zeta (x, x_2, \cdots , x_k) \mu_k(x, dx_2 \cdots dx_k)
=\zeta (x)+\int_{S\XC} \zeta (x, \y)\mu (x, d\y),
\end{equation}
where $\zeta(\x)=\zeta_k(\x)$ for $\x\in X^k$
and $\mu(x, d\y)$ is the stochastic kernel on $\XC$ (stoichiometric kernel) with the 'coordinates' $\mu_k(x,d\y)$ on $SX^k$.

Further on we shall assume for simplicity that the size of possible compounds is uniformly bounded,
so that all sums over sizes used below are finite. This restriction is also natural from the practical point of view,
as the sizes of compounds met in practice are very small
(usually $2$ or at most, and rarely, $3$).

\subsection{QE and QSS}

The {\it quasi-steady-state} (QSS) assumption states that the compounds exist in very small
 concentrations as compared with the concentration of free particle
(because they form and dissolve very quickly) and the {\it quasi-equilibrium} (QE) assumption states
that the reaction of equilibration
between particles and compounds is much faster than the reaction between compounds meaning
that the compounds exist all the time in a fast equilibrium with the set of basic particles.
Let us discuss the important conclusions from these assumptions.

First of all, by QSS, the free energy of the compounds can be taken in the form of the perfect free energy
(the free energy of the ideal gas or of dilute solutions), so that the total free energy of the system becomes
\[
F_{tot}(c, \zeta_2, \zeta_3, \cdots)=F_{tot}(\zeta)
=F(c (.)) + \int_{S\XC_{\ge 2}} \zeta(\x) \left( \ln \frac{\zeta(\x)}{\zeta^*(\x)}-1\right)\M(d\x)
\]
\begin{equation}
\label{eqfreeengenmixcomp}
=F(c (.)) +\sum_{j\ge 2} \int_{SX^j} \zeta_j(\x) \left( \ln \frac{\zeta_j(\x)}{\zeta^*_j(\x)}-1\right)M_j(d\x)
\end{equation}
with $F$ the free energy of particles as introduced above and with $\zeta_j^*$, $j>1$, some equilibrium concentrations.
Generally speaking, $\zeta^*$ should depend on $c$, but again by QSS, the concentration of particles
are large and vary slowly as compared with the compounds implying that this dependence can be neglected in the first approximation.
In the same approximation we do not distinguish the concentrations
 of free particles $\zeta_1(x)$ and their total concentration $c(x)$.

By QE the compounds are all the time in equilibrium with particles. As in equilibrium the free energy takes its minimum,
$\zeta_j$, $j\ge 2$, can be found from the condition of the extremum:
\[
\left.\frac{d}{d\ep}\right|_{\ep=0} \left[F\left(c(.)-\ep \int_{S\XC} \om (., \y)\mu(., d\y)\right)
+\int_{S\XC_{\ge 2}} (\zeta+\ep \om) (\y))  \left( \ln \frac{(\zeta+\ep \om)(\y)}{\zeta^*(\y)}-1\right)\M(d\y)\right]=0,
\]
that should hold for all symmetric functions $\om (\x)$ on $\XC_{\ge 2}$.
By the definition of the variational derivative this implies
\begin{equation}
\label{eqfreeengenmixcompextr}
-\int_X \frac{\de F}{\de c(x)}\int_{S\XC} \om (x, \y)\mu(x, d\y)M(dx)+\int_{S\XC_{\ge 2}} \om (\x)\ln \frac{\zeta(\x)}{\zeta^*(\x)} \M(d\x)=0
\end{equation}
for all $\om$ and hence, by \eqref{eqasscompbasmaes},
\begin{equation}
\label{eqfreeengenmixcompextr1}
\int_{SX^j} \om _j(x_1, x_2, \cdots, x_j)\left(\ln \frac{\zeta_j(\x)}{\zeta^*_j(\x)}-j\frac{\de F}{\de c(x_1)}\right) M_j(dx_1 dx_2 \cdots dx_j)=0
\end{equation}
for all $j\ge 2$ and $\om_j$. By the symmetry this implies
\begin{equation}
\label{eqfreeengenmixcompextr2}
\int_{SX^j} \om _j(x_1,\cdots, x_j)\left(\ln \frac{\zeta_j(\x)}{\zeta^*_j(\x)}
-\sum_{l=1}^j \frac{\de F}{\de c(x_l)}\right) M_j(dx_1 \cdots dx_j)=0.
\end{equation}
Consequently
\begin{equation}
\label{eqfreeengenmixcompextr3}
\ln \frac{\zeta_j(\x)}{\zeta^*_j(\x)}
=\sum_{l=1}^j \frac{\de F}{\de c(x_l)},
\end{equation}
so that finally, for any $j>1$, the minimizing concentrations are
\begin{equation}
\label{eqfreeengenmixcompextr4} \zeta_j(\x;c)=\zeta^*_j(\x) \exp \left\{ \sum_{l=1}^j
\frac{\de F}{\de c(x_l)} \right\}.
\end{equation}

\subsection{Dynamics}

The next consequence of the QSS is that the dynamics of compounds should be linear, because,
their concentration being small, one can neglect their interaction.  This includes the dynamics
of free particles, as their interaction has been accounted for by the formation of compounds.

By \eqref{eqMarkjumpgenmul3}, assuming \eqref{eqdisintmes1} and using the notations for the concentrations of compounds introduced above,
the general jump-type Markov evolution on concentrations $\zeta$ can be written in the concise form

\begin{equation}
\label{eqMarkjumpdyncomp1}
\dot \zeta (\x)=\int_{S\XC} [\zeta(\y)\tilde \nu (\x, d\y) -\zeta (\x) \nu (\x, d\y)].
\end{equation}
Here $\nu$ is some collection of transition kernels
\[
\nu(\x, d\y)=\{ \nu_{k\to l}(\x, d\y), \, \x \in SX^k, \, \y \in SX^l\}
\]
and the collection $\tilde \nu$ is defined via the following equality of measures on each pair $SX^k\times SX^l$:
\begin{equation}
\label{eqMarkjumpdyncomp2}
\nu_{k\to l}(\y, d\x) M_k(d\y)=\tilde \nu_{l\to k}(\x, d\y)M_l (d\x).
\end{equation}

The additional constraint arising from thermodynamics comes from our assumption,
see \eqref{eqfreeengenmixcomp},  that $\zeta^*$ are equilibrium concentrations of compounds
and therefore they should supply equilibrium for their linear evolution \eqref{eqMarkjumpdyncomp1}, that is
\begin{equation}
\label{eqMarkjumpdyncomp1equ}
\int_{S\XC} [\zeta^*(\y)\tilde \nu (\x, d\y) -\zeta^* (\x) \nu (\x, d\y)]=0
\end{equation}
for any $\x$.

It is useful to distinguish a subclass of processes where particles themselves do not serve as compounds, or in other words,
direct transitions $X\to S\XC_{\ge 2}$ and $S\XC_{\ge 2} \to X$ are not allowed:
\begin{equation}
\label{eqcondpartnotcomp}
\nu (x, d\y)=0, \quad \nu (\y,dx)=0, \quad x\in X, \y\in S\XC_{\ge 2}.
\end{equation}
If this is the case condition  \eqref{eqMarkjumpdyncomp1equ} should be understood as
\begin{equation}
\label{eqMarkjumpdyncomp2equ}
\int_{S\XC_{\ge 2}} [\zeta^*(\y)\tilde \nu (\x, d\y) -\zeta^* (\x) \nu (\x, d\y)]=0, \quad \x\in S\XC_{\ge 2}.
\end{equation}
Otherwise, for \eqref{eqMarkjumpdyncomp1equ} to make sense, equilibrium quantities $\zeta^*(x)$, $x\in X$,
should be defined somehow to complement the definitions of $\zeta^*(\x;c)$ for $\x\in S\XC_{\ge 2}$ given by
\eqref{eqfreeengenmixcompextr3}.

A simpler subclass of processes worth being mentioned present the evolutions preserving the number of particles in the compounds.
In this case, the evolution \eqref{eqMarkjumpdyncomp1} decomposes into the independent evolutions in each $SX^k$:
\begin{equation}
\label{eqMarkjumpdyncomp3}
\dot \zeta_k (\x)=\int_{SX^k} [\zeta_k(\y)\tilde \nu_{k\to k} (\x, d\y) -\zeta_k(\x)\nu_{k\to k} (\x,d\y)], \quad k=1,2, \cdots.
\end{equation}

By \eqref{eqasscompbasmaes3}, the evolution of the compounds \eqref{eqMarkjumpdyncomp1}
implies the following evolution of the total concentration $c$:

\begin{equation}
\begin{split}
\dot c(x) =&\int_{S\XC} [\zeta (\y) \tilde \nu (x,d\y) -\zeta (x) \nu (x, d\y)]\\
\label{eqMarkjumpdyncomp4} &+\int_{S\XC} \mu (x, d\y) \int_{S\XC}[\zeta (\z) \tilde \nu
(x,\y, d\z) -\zeta (x,\y) \nu (x, \y, d\z)].
\end{split}
\end{equation}

It remains now to put it all together.
As shown above, by the QE assumptions $\zeta_k$ for $k>1$ are expressed by \eqref{eqfreeengenmixcompextr3} in terms of $c$.
By QSS, for $k=1$ we have approximately
\begin{equation}
\label{eqMarkjumpdyncomp5}
\zeta_1(x)=\zeta_1(x;c)=c(x).
\end{equation}
Finally, apart from the transformation of particles it is natural to allow additionally their movement in $X$ according to some Markov
process with the generator $L$ (only free particles are moving, as the movement of the short-lived compounds can be neglected).
Then the final evolution of the concentration becomes
\begin{equation}
\begin{split}
\dot c(x) =&L^* c(x)+ \int_{S\XC} [\zeta (\y;c) \tilde \nu (x,d\y) -\zeta_1 (x;c) \nu (x,
d\y)] \\ \label{eqMarkjumpdyncomp6} &+\int_{S\XC} \mu (x, d\y) \int_{S\XC}[\zeta (\z;c)
\tilde \nu (x,\y, d\z) -\zeta (x,\y;c) \nu (x, \y, d\z)]
\end{split}
\end{equation}
supplemented by \eqref{eqfreeengenmixcompextr3} and \eqref{eqMarkjumpdyncomp5}. This
evolution can be called the {\it generalized mass action law (GMAL)}. It is the extension
to an arbitrary state space $X$ of the finite-state-space GMAL. The latter was developed
in general by Gorban et al \cite{G1,GorbanBykYabEssays1986,GorbanShahzad2011} following
the ideas of Michaelis-Menten, Eyring, Stueckelberg, and many others. For the diffusion
equations the formalism of GMAL was also elaborated \cite{GorbanSarkisyan2011}.

If condition \eqref{eqcondpartnotcomp} holds (particles are not compounds), evolution \eqref{eqMarkjumpdyncomp6}
rewrites in a simpler form
\begin{equation}
\label{eqMarkjumpdyncomp61}
\dot c(x) =L^* c(x)+\int_{S\XC} \mu (x, d\y) \int_{S\XC_{\ge 2}}[\zeta (\z;c) \tilde \nu (x,\y, d\z) -\zeta (x,\y;c) \nu (x, \y, d\z)].
\end{equation}

If the particles themselves are given in small concentrations, so that their free energy
has the perfect form \eqref{eqfreeengen11}, evolution \eqref{eqMarkjumpdyncomp6} turns to
\begin{equation}
\begin{split}
\dot c(x) =&L^* c(x)+ \int_{S\XC} \left[\c(\y)\frac{\zeta^*(\y)}{\c^*(\y)}  \tilde \nu
(x,d\y) -\c(x)\frac{\zeta^*(x)}{\c^*(x)} \nu (x, d\y)\right] \\
\label{eqMarkjumpdyncomp7} &+\int_{S\XC} \mu (x, d\y)
\int_{S\XC}\left[\c(\z)\frac{\zeta^*(\z)}{\c^*(\z)} \tilde \nu (x,\y, d\z)
-\c(x,\y)\frac{\zeta^*(x,\y)}{\c^*(x,\y)} \nu (x, \y, d\z)\right],
\end{split}
\end{equation}
where $\zeta^*(x)=c^*(x)$ for $x\in X$ and
\[
\c(x_1,\cdots, x_k)=c(x_1) \cdots c(x_k), \quad \c^*(x_1,\cdots, x_k)=c^*(x_1) \cdots c^*(x_k).
\]
This is the evolution of the MAL for an arbitrary state space $X$.

Important to observe that, for the MAL evolution \eqref{eqMarkjumpdyncomp7}, equilibrium
quantities  $\zeta^*(x)=c^*(x)$ for $x\in X$ are explicitly specified from the expression of free energy, and thus the
condition \eqref{eqMarkjumpdyncomp1equ} is well defined without the restriction
\eqref{eqcondpartnotcomp}. Moreover, condition
\eqref{eqMarkjumpdyncomp1equ} supplemented by the similar condition on the free evolution of $c$, that is assuming $L^*c^*=0$, implies
that $c^*$ are equilibrium concentrations to \eqref{eqMarkjumpdyncomp7}.

\subsection{Basic examples}

For the case of a discrete state space $X=\{A_1, \cdots, A_k\}$ it is more convenient (and well established in the literature)
to use separate enumeration of compounds and reactions.  For each reaction $r$, denoting by $\al_{ri}$ the number
of particles $A_i$ entering the input compound $B_r^-$ and by $\be_{ri}$ the number
of particles $A_i$ entering the output compound $B_r^+$, the reaction can be described schematically as
\[
\sum_i \al_{ri} A_i \rightleftharpoons B_r^- \to B_r^+ \rightleftharpoons \sum_i \be_{ri} A_i.
\]
Here $\al_{ri}, \be_{ri}$ are known as stoichiometric coefficients and the vector
$\nu_{r}=(\be_{ri}-\al_{ri})$ as the stoichiometric vector of the reaction $r$.
In this notation the evolution \eqref{eqMarkjumpdyncomp6} becomes
\begin{equation}
\label{eqMarkjumpdyncomp611}
\dot c_i =\sum_{l\neq j} [\kappa_{jl} \zeta_l -\kappa_{lj} \zeta_j] \nu_{ji},
\end{equation}
with some $\kappa_{lj}$ playing the role of transitions $\mu(x,d\y)$ of \eqref{eqMarkjumpdyncomp6};
the summation is over all pairs of compounds $(l,j)$ and
\begin{equation}
\label{eqMarkjumpdyncomp612} \zeta_l=\zeta^*_l\exp \left\{\sum_j \frac{\pa \psi}{\pa
c_j}(c)\nu_{lj} \right\}
\end{equation}
for each compound $l$, with $F,\psi$ from \eqref{eqfreeengen3}.

Most of real life evolutions involve compounds consisting of only two particles. If only pairs to pairs transitions
can occur then GMAL \eqref{eqMarkjumpdyncomp6} and MAL \eqref{eqMarkjumpdyncomp7} take the form
\begin{equation}
\label{eqMarkjumpdyncomp6pair}
\dot c(x) =L^* c(x)
+\int_X \mu (x, dy) \int_{SX^2}[\zeta (z_1,z_2;c) \tilde \nu (x,y, dz_1dz_2) -\zeta (x,y;c) \nu (x,y, dz_1dz_2)],
\end{equation}
and respectively
\begin{equation}
\begin{split}
&\dot c(x) =L^* c(x)\\
 \label{eqMarkjumpdyncomp7pair} &+\int_X \mu (x, dy)
\int_{SX^2}\left[\frac{c(z_1)c(z_2)} {c^*(z_1)c^*(z_2)}\zeta (z_1,z_2) \tilde \nu (x,y,
dz_1dz_2)
 -\frac{c(x)c(y)}{c^*(x)c^*(y)}\zeta (x,y) \nu (x,y, dz_1dz_2)\right].
 \end{split}
\end{equation}

Notice now that the densities entering the kernel $\mu$ can be transferred to the rates $\nu$.
Hence, as was already pointed out, basically only the support of $\mu$ is essential.
It turns out that two particular cases cover all interesting examples. The first comes from the assumption that any pair of particles can interact.
In this case, the measure $M_2$ on pairs can be taken to be proportional to the product measure $M(dx)M(dy)$ and then one can take
\begin{equation}
\label{eqMarkjumpdyncomp8pair}
\mu (x, dy) =M (dy).
\end{equation}
In the second case, the state space $X$ is the product $\R^m\times V$ equipped with the measure $dx M(dv)$, where the first component is interpreted as
the position in space and where it is assumed that a pair of particles can interact if and only if their
positions in space coincide.
In this case one has
\begin{equation}
\label{eqMarkjumpdyncom9pair}
\mu((y,v), d(z,w))=\de (y-z) M(dw).
\end{equation}
For instance, the full Boltzmann equation is of that type.

Let us also distinguish two cases of interest concerning transitions $\nu$. The simplest case
is of course when the transitions $\nu (x_1,x_2; dy_1 dy_2)$ are absolutely continuous with
respect to the product measure $M(dy_1)M(dy_2)$. If also \eqref{eqMarkjumpdyncomp8pair} holds
evolution \eqref{eqMarkjumpdyncomp6pair} turns to
\begin{equation}
\label{eqMarkjumpdyncomp10pair}
\dot c(x) =L^* c(x)
+\int_X M(dy) \int_{SX^2}[\zeta (z_1,z_2;c) \nu (z_1, z_2;x,y) -\zeta (x,y;c) \nu (x,y; z_1, z_2)] M(dz_1)M(dz_2).
\end{equation}
This example is however of rather limited applicability.
More interesting situation occurs when there is given another measure space $\Om$ with the measure $d\om$ and a family
of $M_2$-measure-preserving bijections $G_{\om}:X^2\to X^2$ depending on $\om$ as a parameter such that
\begin{equation}
\label{eqMarkjumpdyncomp11pair}
\nu (\x, d\y) =\int_{\Om} B (\x,\om)\de(\y- G_{\om}(\x)) d\om d\y
\end{equation}
 with some function $B(\x, \om)$ on $X^2\times \Om$. Since
\[
 \int_{X^2}\int_{\Om}  f(\y,\x) \de (\x-G_{\om}(\y))B(\y,\om) d\x d\om M_2(d\y)
 = \int_X\int_{\Om}  f(\y,G_{\om}(\y)) B(\y,\om) d\om M_2(d\y)
\]
\[
=\int_X\int_{\Om}  f(G^{-1}_{\om}(\x),\x) B(G^{-1}_{\om}(\x),\om) d\om M_2(d\x)
= \int_{X^2}\int_{\Om}  f(\y,\x) \de (\y-G^{-1}_{\om}(\x))B(\y,\om) d\y d\om M_2(d\x),
\]
it follows (see \eqref{eqdisintmes1})  that
\[
\tilde \nu (\x, d\y) =\int_{\Om} B(\y,\om)\de(\y- G^{-1}_{\om}(\x)) d\om d\y.
\]
Consequently, assuming again \eqref{eqMarkjumpdyncomp8pair},
evolution \eqref{eqMarkjumpdyncomp6pair} turns to
\begin{equation}
\label{eqMarkjumpdyncomp12pair}
\dot c(x) =L^* c(x)
+\int_X M(dx_2) \int_{\Om}[\zeta (G^{-1}_{\om}(\x);c) B (G^{-1}_{\om}(\x),\om) -\zeta (\x;c) B(\x, \om)] d\om.
\end{equation}
In particular, if $B(x,\om)$ is invariant under the action of $G_{\om}$, this simplifies to
\begin{equation}
\label{eqMarkjumpdyncomp13pair}
\dot c(x) =L^* c(x)
+\int_X M(dx_2) \int_{\Om}B(\x,\om) [\zeta (G^{-1}_{\om}(\x);c) -\zeta (\x;c) ] d\om.
\end{equation}
For instance, the spatially homogeneous Boltzmann equation is of that type,
as well as the mollified Boltzmann equation and their $k$th order extension, see \cite{Ko03}.

To give an example of evolutions arising from \eqref{eqMarkjumpdyncom9pair}, assume
the rates $\nu$ are absolutely continuous with respect to the second variable
and are independent of the first variable. Then
evolution \eqref{eqMarkjumpdyncomp6pair} turns to
\begin{equation}
\begin{split}
\dot c(y_1,v_1) =&L^* c(y_1,v_1)+\int_{\R^d} \int_V  M(dv_2) \de (y_2-y_1)
\int_{SV^2}M(dw_1)M(dw_2)\\
\label{eqMarkjumpdynsingpair1} &\times [\zeta (w_1,y_1;w_2,y_2;c) \nu (w_1, w_2; v_1,v_2)
-\zeta (v_1,y_1;v_2,y_2;c) \nu (v_1,v_2; w_1, w_2)].
\end{split}
\end{equation}

\section{Analysis of equilibria}

\subsection{Evolution of the free energy}

We are interested in conditions ensuring the decrease of the free energy $F(c(.))$.
If $c(x)$ evolves according to \eqref{eqMarkjumpdyncomp61}, the free energy evolves as
(where \eqref{eqasscompbasmaes} is used to get rid of $\mu$)
\begin{equation*}
\begin{split}
\dot F(c(.))= &\int_X \frac{\de F}{\de c(y)}\dot c(y) M(dy)=\int_X \frac{\de F}{\de
c(y)}L^* c(y) M(dy) \\
 &+\sum_{k=2}^{\infty} \int_{SX^k} k M_k(d\y)\frac{\de F}{\de c(y_1)} \int_{S\XC_{\ge 2}}
[\zeta (\z;c) \tilde \nu (\y;d\z) -\zeta (\y;c) \nu (\y, d\z)] \M(d\z).
\end{split}
\end{equation*}
Using symmetry and introducing a handy special notation
\[
\left(\frac{\de F}{\de c(\y)}\right)^{\oplus}=\sum_{j=1}^l \frac{\de F}{\de c(y_j)}, \quad \y=(y_1, \cdots, y_l) \in X^l, \quad l=1,2, \cdots ,
\]
this rewrites as
\begin{equation*}
\begin{split}
\dot F(c(.))= &\int_X \frac{\de F}{\de c(y)}L^* c(y) M(dy)\\
&+\int_{S\XC_{\ge 2}} M(d\y)  \int_{S\XC_{\ge 2}} \left(\frac{\de F}{\de
c(\y)}\right)^{\oplus} [\zeta (\z;c) \tilde \nu (\y;d\z) -\zeta (\y;c) \nu (\y, d\z)]
\end{split}
\end{equation*}
or, using the definition of $\tilde \nu$, as
\begin{equation*}
\begin{split}
\dot F(c(.))=&\int_X \frac{\de F}{\de c(y)}L^* c(y) M(dy)\\
&+\int_{S\XC_{\ge 2}} \int_{S\XC_{\ge 2}}\left(\frac{\de F}{\de c(\y)}\right)^{\oplus}
\left[\zeta (\z;c) \nu (\z, d\y)M(d\z)- \zeta (\y;c) \nu (\y, d\z)M(d\y)\right].
\end{split}
\end{equation*}
Finally, relabeling the variables in the second term of the last integral yields
\begin{equation}
\begin{split}
\dot F(c(.))=&\int_X \frac{\de F}{\de c(x)}L^* c(x) M(dx)\\
\label{eqMarkjumpdynfree1} &+\int_{S\XC_{\ge 2}}  \int_{S\XC_{\ge 2}}
\left[\left(\frac{\de F}{\de c(\y)}\right)^{\oplus}-\left(\frac{\de F}{\de
c(\z)}\right)^{\oplus}\right] \zeta (\z;c) \nu (\z, d\y) \M(d\z).
\end{split}
\end{equation}

Turning to the general case \eqref{eqMarkjumpdyncomp6} we find similarly that
\begin{equation}
\begin{split}
\dot F(c(.))=&\int_X \frac{\de F}{\de c(x)}L^* c(x) M(dx)\\
\label{eqMarkjumpdynfree2} &+\int_{S\XC}  \int_{S\XC} \left[\left(\frac{\de F}{\de
c(\y)}\right)^{\oplus}-\left(\frac{\de F}{\de c(\z)}\right)^{\oplus}\right] \zeta (\z;c)
\nu (\z, d\y)\M(d\z).
\end{split}
\end{equation}

\subsection{Complex balance and detailed balance}

Evolutions \eqref{eqMarkjumpdynfree1} or \eqref{eqMarkjumpdynfree2} can be considered as continuous-state-space analogs
of the discrete state-space representation giving the dynamics of the free energy
in terms of the sum over reactions, as here we have the representation in terms of the integral
over the pairs $(\y,\z)$ that effectively parametrized possible reactions.

With this analogy in mind, and dealing again first with evolution \eqref{eqMarkjumpdyncomp61} and \eqref{eqMarkjumpdynfree1},
  we can now generalize the trick used for the discrete case and introduce the auxiliary function
\begin{equation}
\begin{split}
\theta(\la)=\theta (\la;c) =&\int_{S\XC_{\ge 2}}\int_{S\XC_{\ge 2}} M(d\z) \zeta^* (\z)
\nu (\z;d\y)\\
\label{eqdefGorbanTheta} &\times \exp \left\{ \la \left(\frac{\de F}{\de
c(\z)}\right)^{\oplus} +(1-\la) \left(\frac{\de F}{\de c(\y)}\right)^{\oplus}\right\},
\end{split}
\end{equation}
so that
\begin{equation}
\begin{split}
\theta '(\la)=\frac{d}{d\la} \theta (\la;c)=& \int_{S\XC_{\ge 2}}\int_{S\XC_{\ge 2}}
M(d\z) \zeta^* (\z) \nu (\z;d\y)
 \left[ \left(\frac{\de F}{\de c(\z)}\right)^{\oplus}- \left(\frac{\de F}{\de
 c(\y)}\right)^{\oplus}\right] \\
&\times\exp\left\{ \la  \left(\frac{\de F}{\de c(\z)}\right)^{\oplus} +(1-\la)
\left(\frac{\de F}{\de c(\y)}\right)^{\oplus} \right\}, \\
\theta ''(\la)=\frac{d^2}{d\la^2} \theta (\la;c)= &\int_{S\XC_{\ge 2}}\int_{S\XC_{\ge 2}}
M(d\z) \zeta^* (\z) \nu (\z;d\y)
 \left[ \left(\frac{\de F}{\de c(\z)}\right)^{\oplus}- \left(\frac{\de F}{\de c(\y)}\right)^{\oplus}\right]^2
\label{eqdefGorbanThetasecder} \\
&\times\exp\left\{ \la  \left(\frac{\de F}{\de
c(\z)}\right)^{\oplus} +(1-\la)  \left(\frac{\de F}{\de c(\y)}\right)^{\oplus} \right\}.
\end{split}
\end{equation}

Hence $\theta''(\la)\ge 0$, so that $\theta (\la)$ is a convex function,
and moreover, by \eqref{eqMarkjumpdynpairreac3},
\begin{equation}
\label{eqMarkjumpdynfree3}
\dot F(c(.))=\int_X \frac{\de F}{\de c(x)}L^* c(x) M(dx)-\theta '(1).
\end{equation}

Consequently, the conditions
\begin{equation}
\label{eqMarkjumpdynfree4}
\int_X \frac{\de F}{\de c(x)}L^* c(x) M(dx)\le 0
\end{equation}
and
\begin{equation}
\label{eqMarkjumpdynfree5}
\theta (0)\le \theta (1)
\end{equation}
are sufficient for the decrease of free energy along the evolution \eqref{eqMarkjumpdyncomp10pair}: $\dot F(c(.))\le 0$.

Inequality \eqref{eqMarkjumpdynfree5} introduced in \cite{G1} is referred to as
$G$-inequality. It is a natural weakening of a stronger condition
\begin{equation}
\label{eqMarkjumpdynfree6}
\theta (0)= \theta (1),
\end{equation}
which is often easier to analyze, since it rewrites as
\[
\int_{S\XC_{\ge 2}}\int_{S\XC_{\ge 2}} M(d\z) \zeta^* (\z) \nu (\z;d\y)
\exp \left\{ \left(\frac{\de F}{\de c(\z)}\right)^{\oplus}\right\}
\]
\[
=\int_{S\XC_{\ge 2}}\int_{S\XC_{\ge 2}} M(d\z) \zeta^* (\z) \nu (\z;d\y)
\exp \left\{ \left(\frac{\de F}{\de c(\y)}\right)^{\oplus}\right\}
\]
or equivalently, again using the definition of $\tilde \nu$, as
\begin{equation}
\label{eqMarkjumpdynfree7}
\int_{S\XC_{\ge 2}}\int_{S\XC_{\ge 2}} M(d\z) [\zeta^* (\z) \nu (\z;d\y) -\zeta^* (\y) \tilde \nu (\z;d\y)]
\exp\left\{\left(\frac{\de F}{\de c(\z)}\right)^{\oplus} \right\}=0.
\end{equation}
Assuming the functional $F$ is rich enough, so that the linear combinations of the exponents
\[
\exp\left\{ \left(\frac{\de F}{\de c(\y)}\right)^{\oplus} \right\}
\]
for all continuous functions $c$ are dense in the space of continuous functions on $S\XC_{\ge 2}$,
as is the case for the MAL evolution, \eqref{eqMarkjumpdynfree7} implies
\begin{equation}
\label{eqMarkjumpdynfree8}
\int_{S\XC_{\ge 2}} [\zeta^* (\z) \nu (\z;d\y) -\zeta^* (\y) \tilde \nu (\z;d\y)]=0
\end{equation}
for all $\z \in \S\XC_{\ge 2}$, which is the equilibrium condition \eqref{eqMarkjumpdyncomp2equ}.

This condition \eqref{eqMarkjumpdynfree8} is called the {\it complex balance} condition
for evolution \eqref{eqMarkjumpdyncomp61}. As was shown, together with \eqref{eqMarkjumpdynfree4},
it is sufficient for evolution \eqref{eqMarkjumpdyncomp61} to 'respect' thermodynamics:  $\dot F(c(.))\le 0$.

In particular, for the pair-interaction dynamics  \eqref{eqMarkjumpdyncomp10pair} and \eqref{eqMarkjumpdyncomp12pair},
the complex balance condition takes the forms
\begin{equation}
\label{eqcompbalpair1}
\int_{S\XC^2}\left[\zeta^* (\y) \nu (\y;\x)-\zeta^* (\x) \nu (\x;\y)\right] d\y=0, \quad \x\in S\XC^2,
\end{equation}
and respectively
\begin{equation}
\label{eqcompbalpair2}
\int_{\Om}\left[\zeta^* (\y) B (\y;\om)-\zeta^* (\x) B (\x;\om)\right] d\om=0, \quad \x\in S\XC^2,
\end{equation}
and the evolution of the free energy
\begin{equation}
\begin{split}
\dot F(c(.))=&\int_X \frac{\de F}{\de c(x)}L^* c(x) M(dx)\\
\label{eqMarkjumpdynpairreac3} &+\frac14\int_{X^2}\int_{X^2}\left[\frac{\de F}{\de
c(y_1)}+\frac{\de F}{\de c(y_2)} -\frac{\de F}{\de c(x_1)}-\frac{\de F}{\de
c(x_2)}\right] \zeta (\x;c) \nu (\x;\y) M(d\x) M(d\y)
\end{split}
\end{equation}
and respectively
\begin{equation}
\begin{split}
\dot F(c(.))=&\int_X \frac{\de F}{\de c(x)}L^* c(x) M(dx)\\
\label{eqMarkjumpdynpair1reac3} &+\frac14\int_{X^2}\int_{X^2}\left[\frac{\de F}{\de
c(y_1)}+\frac{\de F}{\de c(y_2)} -\frac{\de F}{\de c(x_1)}-\frac{\de F}{\de
c(x_2)}\right] \zeta (\x;c) B(\x, \om) M(d\x) d\om,
\end{split}
\end{equation}
where
\[
\y=(y_1,y_2)=G^{\om} (\x).
\]

Of course \eqref{eqMarkjumpdynfree8} holds if
\begin{equation}
\label{eqMarkjumpdynfree9}
\zeta^* (\y) \nu (\y;\x)-\zeta^* (\x) \nu (\x;\y)=0
\end{equation}
for all $\x,\y \in S\XC_{\ge 2}$. This more restrictive condition is called the
 {\it detailed balance} condition for \eqref{eqMarkjumpdyncomp61}.

Turning to more general evolution \eqref{eqMarkjumpdyncomp6}, \eqref{eqMarkjumpdynfree2}
we shall reduce our attention only to MAL evolution,
where
\[
c(x)=c^*(x)\exp \left\{\frac{\de F}{\de c(x)}\right\}
\]
for the free energy in the perfect form \eqref{eqfreeengen11}. This
makes the notations for $\zeta(\x)$ consistent with
the notations $\zeta_1(x)=c(x)$. Consequently,
introducing $\theta$ by the equation
\begin{equation}
\begin{split}
\theta(\la)=&\int_{S\XC}\int_{S\XC} M(d\z) \zeta^* (\z) \nu (\z;d\y)
 \exp \left\{ \la \left(\frac{\de F}{\de c(\z)}\right)^{\oplus}
+(1-\la) \left(\frac{\de F}{\de c(\y)}\right)^{\oplus}\right\}\\
\label{eqdefGorbanThetaMal} =&\int_{S\XC}\int_{S\XC} M(d\z) \zeta^* (\z) \nu (\z;d\y)
\left[\frac{\c(\z)}{\c^*(\z)}\right]^{\la}\left[\frac{\c(\y)}{\c^*(\y)}\right]^{1-\la},
\end{split}
\end{equation}
yields again \eqref{eqMarkjumpdynfree3}. Moreover, condition $\theta(0)=\theta (1)$
becomes equivalent to condition \eqref{eqMarkjumpdyncomp1equ}, which is the {\it complex balance condition} for general MAL.

\subsection{Points of equilibrium}

Let us start with the MAL dynamics. As we know already, then $c^*$ is an equilibrium
point. Are there other (positive) equilibrium points? Assume the complex balance
condition \eqref{eqMarkjumpdyncomp1equ} holds, and let $c(x)$ be an equilibrium. Then we
have $\theta(0)=\theta(1)$ and $\theta'(1)=0$, which together with convexity of $\theta$
implies that $\theta(\la)$ is a constant (for given $c$). Hence $\theta''(\la) =0$, and
consequently, by \eqref{eqdefGorbanThetasecder},
\begin{equation}
\label{eqdefGorbanThetasecder1}
 \left(\frac{\de F}{\de c(\z)}\right)^{\oplus}- \left(\frac{\de F}{\de c(\y)}\right)^{\oplus}=0
\end{equation}
on the support of the measure $ M(d\z) \zeta^* (\z) \nu (\z;d\y)$, which coincides with the support of the measure
 $M(d\z) \nu (\z;d\y)$ if all $\zeta^*$ are strictly positive.

 In particular, it implies the following. Suppose the complex balance condition \eqref{eqMarkjumpdyncomp1equ} holds for
 a MAL evolution, all $\zeta^*$ are strictly positive, the evolution preserves the number of particles
 and the measure $\nu(\z,d\y)M_k(d\z)$ on $X^k\times X^k$ has the full support
 for at least one $k\ge 1$. Then $c$ is an equilibrium if and only if $\de F/\de c(x)$ is a constant (as a function of $x$)
 and $L^*c=0$, and hence, buy the structure of $F$, if and only if $c(x)$ coincides with $c^*$ up to a multiplicative constant.
Alternatively, assume $\zeta^*>0$, \eqref{eqMarkjumpdyncomp1equ} holds and the measure $\nu(\z,d\y)M_k(d\z)$ on $X^l\times X^k$
 has the full support for at least one pair $k\neq l$. Then $c^*$ is the only (positive) equilibrium.

Turning to evolution \eqref{eqMarkjumpdyncomp61} and \eqref{eqMarkjumpdynfree1} we can conclude similarly that
 if complex balance condition \eqref{eqMarkjumpdynfree8} holds,
 all $\zeta^*(\z)$ for $\z\in S\XC$ are strictly positive, the evolution preserves the number of particles
 and the measure $\nu(\z,d\y)M_k(d\z)$, $\z,y\in X^k$, has the full support
 for at least one $k\ge 1$, then $c$ is an equilibrium if and only if $\de F/\de c(x)$ is a constant (as a function of $x$)
 and $L^*c=0$. Alternatively, assume $\zeta^*>0$, \eqref{eqMarkjumpdynfree8} holds and the measure $\nu(\z,d\y)M_k(d\z)$ on $X^l\times X^k$
 has the full support for at least one pair $k\neq l$. Then $c$ is equilibrium if and only if  $\de F/\de c(x)=0$ and $L^*c=0$.

\subsection{Comments on the transformations of the free energy}

The GMAL evolution \eqref{eqMarkjumpdyncomp6} will not be changed if we make a linear shift of the free energy changing $F$ to
\[
\tilde F= F+\int \om (x) c(x) M(dx)
\]
with some $\om (x)$ and simultaneously change $\zeta^*$ to
\[
\tilde \zeta^*(\x)=\zeta(\x) \exp\left\{ -\sum_{l=1}^j  \om (x_l)\right\}, \quad \x=(x_1,
\cdots , x_j)\in X^j.
\]
Reducing our attention for simplicity to evolution \eqref{eqMarkjumpdyncomp61},  \eqref{eqcondpartnotcomp}, suppose the  complex
balance condition \eqref{eqMarkjumpdyncomp2equ} does not hold. The natural question arises whether
we can find a function $\om$ such that for new $\tilde F, \tilde \zeta^*$ it becomes valid, that is

\begin{equation}
\label{eqMarkjumpdyncomp2equlinch}
\int_{S\XC_{\ge 2}} [\zeta^*(\y)\exp\{-\om^{\oplus} (\y)\} \tilde \nu (\x, d\y)
 -\zeta^* (\x)\exp\{-\om^{\oplus} (\y)\} \nu (\x, d\y)]=0, \quad \x\in S\XC_{\ge 2},
\end{equation}
where
\[
\om ^{\oplus}(\x)=\om (x_1)+\cdots +\om (x_k), \quad \x=(x_1, \cdots, x_k).
\]

In discrete setting this question can be effectively answered algebraically by the
so-called deficiency zero theorem \cite{Feinberg1972,GorbanShahzad2011}. In our setting
let us note only that, if all initial $\zeta^*$ had a product form, this question reduces
to the directly verifiable question on whether the products
\[
\tilde \zeta^*(\x)=c^*(x_1) \cdots c^*(x_k)
\]
satisfy \eqref{eqMarkjumpdyncomp2equ}.

\section{More general free energy for compounds}

Motivated by Morimoto's Theorem \ref{MorimotoHtheoremext} \cite{Morimoto1963}, it is
natural to use for the free energy of the  compounds in \eqref{eqfreeengenmixcomp} the
thermodynamic Lyapunov function \eqref{eqdefgenentfu} generalizing
\eqref{eqfreeengenmixcomp} to
\begin{equation}
\label{eqfreeengenmixcomp1}
F_{tot}(c, \zeta_2, \zeta_3, \cdots)
=F(c (.)) +\sum_{j\ge 2} \int_{SX^j} H_h(\zeta_j (\x)\| \zeta_j^*(\x)) M_j(d\x).
\end{equation}

The condition of extremality \eqref{eqfreeengenmixcompextr} extends to
\begin{equation}
\label{eqfreeengenmixcompextrext}
-\int_X \frac{\de F}{\de c(x)}\int_{S\XC} \om (x, \y)\mu(x, d\y)M(dx)
+\int_{S\XC_{\ge 2}} \om (\x) h'\left( \frac{\zeta(\x)}{\zeta^*(\x)}\right) \M(d\x)=0
\end{equation}
and the equilibrium quantities \eqref{eqfreeengenmixcompextr3} become

\begin{equation}
\label{eqfreeengenmixcompextr3ext}
\zeta_j(\x;c)=\zeta^*_j(\x) g\left( \sum_{l=1}^j  \frac{\de F}{\de c(x_l)} \right),
\end{equation}
where $g$ is the inverse function to $h'$. Recall that $h$ was assumed convex on $\R^+$ and hence $h'$
is an increasing function $(0,\infty)\to (a,b)$ with some (finite or infinite) interval $(a,b)$.
Hence $g$ is an increasing function on $(a,b)$. Let
\[
G(x)=\int_{y_0}^x g(y) dy
\]
with some $y_0\in [a,b]$. Then $G$ is a concave function. In particular,
for the Burg relative entropy \eqref{eqdefgenentfuBurg}, $h(x)=-\ln x$ and $h'(x()=-1/x$ is self-inverse, so that $g(y)=-1/y$.
Formula \eqref{eqfreeengenmixcompextr3ext} become
\begin{equation}
\label{eqfreeengenmixcompextr3extBurg}
\zeta_j(\x;c)=-\frac{\zeta^*_j(\x)}{ \sum_{l=1}^j  c^*(x_l)/c(x_l)},
\end{equation}
and one can choose $G(x)=-\ln (x)$.

Dynamics equations \eqref{eqMarkjumpdyncomp6} or \eqref{eqMarkjumpdyncomp61} remain the same, though of course
with $\zeta$ of form \eqref{eqfreeengenmixcompextr3ext} rather than \eqref{eqfreeengenmixcompextr3}.
Consequently the evolution of the free energy remains the same, that is \eqref{eqMarkjumpdynfree1} or \eqref{eqMarkjumpdynfree2}.
The only thing needed a modification is the function $\theta$.
Let us restrict the discussion to evolution \eqref{eqMarkjumpdyncomp61} and \eqref{eqMarkjumpdynfree1} only
(that is, with restriction \eqref{eqcondpartnotcomp}) and define
\begin{equation}
\label{eqdefGorbanTheta1}
\theta(\la) =\int_{S\XC_{\ge 2}}\int_{S\XC_{\ge 2}} M(d\z) \zeta^* (\z) \nu (\z;d\y)
G \left( \la \left(\frac{\de F}{\de c(\z)}\right)^{\oplus}
+(1-\la) \left(\frac{\de F}{\de c(\y)}\right)^{\oplus}\right),
\end{equation}
so that
\begin{equation*}
\begin{split}
\theta '(\la)=\frac{d}{d\la} \theta (\la;c)= &\int_{S\XC_{\ge 2}}\int_{S\XC_{\ge 2}}
M(d\z) \zeta^* (\z) \nu (\z;d\y)
 \left[ \left(\frac{\de F}{\de c(\z)}\right)^{\oplus}- \left(\frac{\de F}{\de
 c(\y)}\right)^{\oplus}\right]\\
&\times g\left\{ \la  \left(\frac{\de F}{\de c(\z)}\right)^{\oplus} +(1-\la)
\left(\frac{\de F}{\de c(\y)}\right)^{\oplus} \right).
\end{split}
\end{equation*}
Then we get again $\theta''(\la)\ge 0$ and \eqref{eqMarkjumpdynfree3}.
Again condition $\theta (0)= \theta (1)$ turns out to be sufficient
for the decrease of the free energy by the evolution, and we finally conclude that if the
 linear combinations of the functions
 \[
G \left(\left(\frac{\de F}{\de c(\y)}\right)^{\oplus} \right)
\]
for all continuous functions $c$ are dense in the space of continuous functions on $S\XC_{\ge 2}$,
condition $\theta (0)= \theta (1)$ is equivalent to \eqref{eqMarkjumpdynfree8}, i.e. to the complex balance condition.

\section{Diffusion approximation}

\subsection{Binary mechanisms of diffusion}

Let us consider a lattice $ h\Z^d$, $h>0$, in $\R^n$ equipped with the standard basis $e_1, \cdots, e_n$.
 To each cell or site $x=h(j_1, \cdots , j_n)$ there is attached a locally compact state space $V$
 specifying the possible types of particles. Fixing some measure $M(dv)$ in $V$ we can speak about the concentration
 $c(x,v)$ of particles of type $v$ at the site $x$. The concentration of pairs will be considered
 with respect to the product measure on $V^2$.

 By $N(h,x)$ let us denote the set of neighboring cells to $x$, that is
 \[
 N(x,h)=\{y=x\pm h e_i, \quad i=1, \cdots , n\}.
 \]

 We start here with modeling only the movement
of the particles around $h\Z^d$, when no change of type is possible.
  We shall assume that only particles in neighboring cells
can interact and that the interaction is pairwise (which is mostly observed in practice).
We shall also assume that our lattice is homogenous in the sense that all rate constants, equilibria concentrations, etc,
do not depend on the site.

 There are three natural mechanisms of transitions between any
chosen pair of neighboring cells $(x,y=x+he_i)$, which we shall also denoted $I,II$
\cite{GorbanSarkisyan2011}:

Exchange: $(v^I, w^{II}) \to (v^{II}, w^I)$, that is, particles of type $v,w$ exchange places;

Clustering: $(v^I, w^{II}) \to (v^{II}, w^{II})$, that is, a particle from one cell attracts a particle from another one;

Repulsion: $(v^I, w^I) \to (v^I, w^{II})$, which is the inverse process to clustering.

In the spirit of our general approach, we shall assume that any pair of particles, before an interaction, should form a compound
 of two particles. Moreover, the interaction between compounds is linear
 and the number of pairs are in fast equilibrium with the concentration
of free particles according to the rule  \eqref{eqfreeengenmixcompextr4}, that is, the concentration $\zeta((x,v),(y,w))$
of pairs in two neighboring cells $(x,y=x+he_i)$ equals
\begin{equation}
\label{eqconseqdif} \zeta((x,v),(y,w);c)=\zeta^*(v,w) \exp \left\{ \frac{\de F}{\de
c(x,v)}+ \frac{\de F}{\de c(y,w)} \right\},
\end{equation}
with some equilibrium $\zeta^*$ (not depending on $x,y$ by the assumed homogeneity) and a free energy functional $F$.

For the case of perfect free energy
\begin{equation}
\label{eqfreeperflat}
F(c(.))=\sum_{x\in h\Z^n} \int_V c(x,v) \left(\ln \frac{c(x,v)}{c^*(v)}-1\right) M(dv),
\end{equation}
this turns to the MAL dependence
\begin{equation}
\label{eqconseqdifMAL}
\zeta((x,v),(y,w);c)=\frac{\zeta^*(v,w)}{c^*(v)c^*(w)} c(x,v) c(y,w).
\end{equation}

Furthermore, as we assumed $M_2$ to have a product form, one can take the transition kernels $\mu$ from
\eqref{eqMarkjumpdyncomp8pair}, that is $\mu (x, dy) =M (dy)$ and hence \eqref{eqasscompbasmaes3} becomes
\begin{equation}
\begin{split}
c(x,v) =&\zeta (x,v) + \sum_{y\in N(h,x)} \int_V \zeta ((x,v),(y,w)) M(dw)\\
\label{eqconseqdifinpairs} =&\zeta (x,v)+\sum_{i=1}^n \int_V [\zeta
((x,v),(x+he_i,w))+\zeta ((x,v),(x-he_i,w))] M(dw).
\end{split}
\end{equation}

\subsection{Exchange}

Let us start with the reaction of exchange. The linear reaction of the concentrations $\zeta((x,v);(y,w))$
due to the exchange mechanism between the cells $(x,y)$ is described by the equation
\[
\dot \zeta((x,v),(y,w))=k(v,w)[\zeta((y,v),(x,w))-\zeta((x,v),(y,w))].
\]
Here the rates $k(v,w)$ do not depend on the sites by homogeneity, but it can depend on the order of the arguments $v,w$.
The r.h.s. of this equation describes the flux of particles along the edge $(x,y)$, or, having in mind another equivalent visual picture,
through the border of the cells centered at $x$ and $y$.

Assuming only the exchange mechanism in the system and the MAL condition \eqref{eqconseqdifMAL} it follows that
\begin{equation*}
\begin{split}
\dot c(x,v)=&\sum_{i=1}^n \int_V [\dot \zeta ((x,v),(x+he_i,w))+\dot \zeta
((x,v),(x-he_i,w))] M(dw)\\
=&\sum_{i=1}^n \int_V k(v,w)\frac{\zeta^*(v,w)}{c^*(v)c^*(w)}M\\
&\times [c(x+he_i,v)c(x,w)-c(x,v)c(x+he_i,w)+c(x-he_i,v)c(x,w)-c(x,v)c(x-he_i,w)].
\end{split}
\end{equation*}
Introducing the normalized rates
\[
\phi (v,w)=  k(v,w)\frac{\zeta^*(v,w)}{c^*(v)c^*(w)}
\]
and expanding the functions $c$ in Taylor series up to the second order we obtain
in the first nontrivial approximation
\begin{equation}
\label{eqecolairexcgh}
\dot c(x,v) =h^2 \int \phi (v,w) [\De c(x,v)c(x,w)-\De c(x,w) c(x,v)] M(dw),
\end{equation}
where the Laplacian $\De$ acts on the first variable of $c(x,v)$.
Allowing additionally the evolution of free particles according to the simplest linear dynamics
\[
\dot c(x,v)=\sum_{y\in N(h,x)} \int_V k(v) [c(y,v)-c(x,v)] M(dw)
\]
yields in the first approximation the dynamics
\begin{equation}
\label{eqecolairexcgh1}
\dot c(x,v) =h^2 k(v) \De c(v,x) +h^2 \int \phi (v,w) [\De c(x,v)c(x,w)-\De c(x,w) c(x,v)] M(dw),
\end{equation}
which can be equivalently written in the form
\begin{equation}
\label{eqecolairexcgh2}
\dot c(x,v) =h^2 k(v) {\rm {div}} \nabla c(x,v) +h^2 \int \phi (v,w) {\rm {div}} [\nabla c(x,v)c(x,w)-\nabla c(x,w) c(x,v)] M(dw),
\end{equation}
(with derivations acting on the first variable of $c$).

To get a proper limiting equation one has to assume, of course, that $k$ and $\phi$ scale appropriately with $h$, so that
the limits
\[
K(v)=\lim_{h\to 0} h^2 k(v), \quad  \Phi(v,w)=\lim_{h\to 0} h^2 \phi(v,w)
\]
exist, in which case the limiting equation takes the form
\begin{equation}
\label{eqecolairexcgh20}
\dot c(x,v) =K(v) {\rm {div}} \nabla c(x,v) +\int \Phi (v,w) {\rm {div}} [\nabla c(x,v)c(x,w)-\nabla c(x,w) c(x,v)] M(dw).
\end{equation}
 The same remark concerns all limiting equations below.

For a more general free energy $F(c(.))$ the evolution becomes
\begin{equation*}
\begin{split}
 \dot c(x,v) =&\sum_{i=1}^n \int_V k(v,w)\zeta^*(v,w)M(dw)\\
 &\times \left[\exp\left\{ \frac{\de F}{\de c(x+he_i,v)}+\frac{\de F}{\de c(x,w)}\right\} -
\exp\left\{ \frac{\de F}{\de c(x+he_i,w)}+\frac{\de F}{\de c(x,v)}\right\} \right.\\
&+\left. \exp\left\{ \frac{\de F}{\de c(x-he_i,v)}+\frac{\de F}{\de c(x,w)}\right\} -
\exp\left\{ \frac{\de F}{\de c(x-he_i,w)}+\frac{\de F}{\de c(x,v)}\right\}\right].
\end{split}
\end{equation*}
Expanding the variational derivatives in Taylor series up to the second order yields
\begin{equation*}
\begin{split}
\exp&\left\{ \frac{\de F}{\de c(x+he_i,v)}+\frac{\de F}{\de c(x,w)}\right\}\\
&=\exp\left\{ \frac{\de F}{\de c(x,v)}+\frac{\de F}{\de c(x,w)}\right\}
\exp\left\{h\frac{\pa}{\pa x_i}\frac{\de F}{\de c(x,v)}+\frac12h^2\frac{\pa^2}{\pa
x_i^2}\frac{\de F}{\de c(x,v)}\right\}\\
&=\exp\left\{ \frac{\de F}{\de c(x,v)}+\frac{\de F}{\de c(x,w)}\right\} \left[1
+h\frac{\pa}{\pa x_i}\frac{\de F}{\de c(x,v)}+\frac12h^2\frac{\pa^2}{\pa x_i^2}\frac{\de
F}{\de c(x,v)} +\frac12h^2 \left(\frac{\pa}{\pa x_i}\frac{\de F}{\de c(x,v)}\right)^2
\right]
\end{split}
\end{equation*}
and similar with other terms.
Thus one sees that zero-order and first order terms again cancel, and the second order terms
yield the equation

\begin{equation}
\begin{split}
\dot c(x,v) =&h^2 \int k(v,w)\zeta^*(v,w) \exp\left\{ \frac{\de F}{\de c(x,v)}+\frac{\de
F}{\de c(x,w)}\right\} M(dw)\\
\label{eqecolairexcgh3} &\times\left(\De \frac{\de F}{\de c(x,v)} -\De \frac{\de F}{\de
c(x,w)} +\left|\nabla \frac{\de F}{\de c(x,v)}\right|^2-\left|\nabla \frac{\de F}{\de
c(x,w)} \right|^2\right),
\end{split}
\end{equation}
which can also be written in the divergence form
\begin{equation}
\begin{split}
\dot c(x,v) =& {\rm {div}} \int \phi(v,w) M(dw) \\
\label{eqecolairexcgh4} &\times \left[ \exp\left\{ \frac{\de F}{\de c(x,w)}\right\}
\nabla \exp\left\{ \frac{\de F}{\de c(x,v)}\right\} - \exp\left\{ \frac{\de F}{\de
c(x,v)}\right\} \nabla \exp\left\{ \frac{\de F}{\de c(x,w)}\right\}\right].
\end{split}
\end{equation}
with
\[
\phi(v,w)= h^2 k(v,w)\zeta^*(v,w).
\]

Let us calculate the evolution of the thermodynamic Lyapunov function $F(c(.))$
along the evolution \eqref{eqecolairexcgh3}. We shall consider the unbounded lattice $h\Z^d$ and its limit $\R^d$
(alternatively, one can work with finite volume assuming appropriate boundary conditions, say periodic).  We have
\[
\dot F(c(.)) =\int_{\R^d}\int_V  \frac{\de F}{\de c(x,v)} \dot c(x,v) \, dx M(dv).
\]
Substituting \eqref{eqecolairexcgh4} and using the symmetry with respect to the integration variable $v,w$ we get
\begin{equation*}
\begin{split}
\dot F(c(.)) =&\frac12 \int_{\R^d}\int_{V^2}  \left[\frac{\de F}{\de
c(x,v)}\phi(v,w)-\frac{\de F}{\de c(x,w)} \phi (w,v)\right] dx M(dv)M(dw)\\
 &\times {\rm {div}} \left[ \exp\left\{ \frac{\de F}{\de c(x,w)}\right\} \nabla \exp\left\{ \frac{\de F}{\de c(x,v)}\right\}
- \exp\left\{ \frac{\de F}{\de c(x,v)}\right\} \nabla \exp\left\{ \frac{\de F}{\de c(x,w)}\right\}\right],
\end{split}
\end{equation*}
or, integrating by parts in $x$,
\begin{equation}
\begin{split}
\dot F(c(.)) =&-\frac12 \int_{\R^d}\int_{V^2} \nabla  \left[\frac{\de F}{\de
c(x,v)}\phi(v,w)-\frac{\de F}{\de c(x,w)} \phi (w,v)\right] dx M(dv)M(dw)\\
 \label{eqecolairexcgh5}
 &\times \nabla \left[\frac{\de F}{\de c(x,v)}-\frac{\de F}{\de c(x,w)} \right]
\exp\left\{ \frac{\de F}{\de c(x,v)} + \frac{\de F}{\de c(x,w)}\right\}.
\end{split}
\end{equation}
Hence, if the {\it detailed balance condition}
\[
\phi(v,w)=\phi(w,v)
\]
 holds (or equivalently $k(v,w)=k(w,v)$), then
\begin{equation}
\begin{split}
\dot F(c(.)) =&-\frac12 \int_{\R^d\times V^2} \phi (v,w) \left|\nabla  \left(\frac{\de
F}{\de c(x,v)}-\frac{\de F}{\de c(x,w)} \right)\right|^2 \, dx M(dv)M(dw)\\
 \label{eqecolairexcgh6} &\times \exp\left\{ \frac{\de F}{\de c(x,v)} + \frac{\de F}{\de
c(x,w)}\right\},
\end{split}
\end{equation}
which is clearly non-positive.

\subsection{Repulsion and attraction}

Let us turn to attraction - repulsion interactions. Introducing the rate constants $k_{atr}(v,w)$, describing the process that pushes
a particle $v$ to a neighboring particle $w$, and $k_{rep}$, describing the process with a which a particle $w$ can kick out a particle $v$
(siting at the same site as $w$) to a neighboring site,
 we can write the following linear evolution of the concentrations $\zeta((x,v);(y,w))$
due to attraction -repulsion mechanism between the cells $(x,y)$:
\begin{equation*}
\begin{split}
\dot \zeta((x,v),(y,w))=&k_{rep}(w,v)\zeta((x,v),(x,w)) +k_{rep}(v,w)\zeta((y,v),(y,w))\\
&-(k_{atr}(v,w)+k_{atr}(w,v))\zeta((x,v),(y,w)),
\end{split}
\end{equation*}
\begin{equation*}
\begin{split}
\dot \zeta((x,v),(x,w))=&k_{atr} (w,v)\zeta((x,v),(y,w))+k_{atr}(v,w)\zeta((y,v),(x,w))\\
&-(k_{rep}(v,w)+k_{rep}(w,v))\zeta((x,v),(x,w)).
\end{split}
\end{equation*}
It is worth noting that $k_{rep}(v,w)$ and $k_{atr}(v,w)$ need not be symmetric functions of $v,w$. Even more so,
there are natural situations with, say,  $k_{atr}(v,w)>0$ and $k_{atr}(w,v)=0$, which means that $v$ is a mobile particle
and $w$ is not.

As now we shall have to take into accounts the compounds of particles sitting on the same site,
\eqref{eqconseqdifinpairs} generalizes to
\begin{equation}
\label{eqconseqdifinpairsgen}
c(x,v)=\zeta (x,v)+\int_V [\zeta ((x,v),(x,w))+\sum_{i=1}^n  \zeta ((x,v),(x+he_i,w))+\zeta ((x,v),(x-he_i,w))] M(dw)
\end{equation}
Moreover, fast equilibrium condition \eqref{eqconseqdif} should be supplemented by the condition
\begin{equation}
\label{eqconseqdifsamesit} \zeta((x,v),(x,w);c)=\tilde \zeta^*(v,w) \exp \left\{
\frac{\de F}{\de c(x,v)}+ \frac{\de F}{\de c(x,w)} \right\},
\end{equation}
with some $\tilde \zeta^*(v,w)$ that can be different from $\zeta^*(v,w)$,
which in the case of the perfect free energy
turns to the MAL dependence
\begin{equation}
\label{eqconseqdifMALsit}
\zeta((x,v),(x,w);c)=\frac{\tilde \zeta^*(v,w)}{c^*(v)c^*(w)} c(x,v) c(x,w).
\end{equation}

Thus taking into account only the attraction-repulsion mechanism, using again
for simplicity the MAL condition \eqref{eqconseqdifMAL},
and introducing the normalized rates
\[
\phi_{atr} (v,w)=  k_{atr}(v,w)\frac{\zeta^*(v,w)}{c^*(v)c^*(w)},
\quad \phi_{rep} (v,w)=  k_{rep}(v,w)\frac{\tilde \zeta^*(v,w)}{c^*(v)c^*(w)},
\]
the evolution of the concentrations becomes

\begin{equation*}
\begin{split}
\dot c(x,v)=&\sum_{i=1}^n \int_V [\phi_{atr}(w,v) c(x,v)c(x+he_i,w)+\phi_{atr}(v,w)
c(x+he_i,v)c(x,w)\\
 &-(\phi_{rep}(w,v)+\phi_{rep}(v,w))c(x,v)c(x,w)]M(dw)\\
 &+\sum_{i=1}^n \int_V [\phi_{atr}(w,v) c(x,v)c(x-he_i,w)+\phi_{atr}(v,w)
 c(x-he_i,v)c(x,w)\\
 &-(\phi_{rep}(w,v)+\phi_{rep}(v,w))c(x,v)c(x,w)]M(dw)\\
 &+\sum_{i=1}^n \int_V [\phi_{rep}(w,v) c(x,v)c(x,w)+\phi_{rep}(v,w)
 c(x+he_i,v)c(x+he_i,w)\\
 &-(\phi_{atr}(w,v)+\phi_{atr}(v,w))c(x,v)c(x+he_i,w)]M(dw)\\
 &+\sum_{i=1}^n \int_V [\phi_{rep}(w,v) c(x,v)c(x,w)+\phi_{rep}(v,w)
 c(x-he_i,v)c(x-he_i,w)\\
 &-(\phi_{atr}(w,v)+\phi_{atr}(v,w))c(x,v)c(x-he_i,w)]M(dw).
\end{split}
\end{equation*}

Expanding the functions $c$ in Taylor series, we see that the terms of zero-order and
first-order in $h$ cancel. Expanding up to the second order we obtain the equation
\begin{equation*}
\begin{split}
\dot c(x,v)=&h^2 \sum_{i=1}^n \int_V \left[\phi_{atr}(w,v) c(x,v) \frac{\pa ^2c}{\pa
x_i^2} (x,w) +\phi_{atr}(v,w) c(x,w) \frac{\pa ^2c}{\pa x_i^2} (x,v)\right]M(dw)\\
&+h^2\sum_{i=1}^n \int_V \phi_{rep}(v,w) \left[ \frac{\pa ^2c}{\pa x_i^2} (x,c) c(x,w)
+\frac{\pa ^2c}{\pa x_i^2} (x,w)c(x,v) +2\frac{\pa c}{\pa x_i} (x,v)\frac{\pa c}{\pa x_i}
(x,w)\right]M(dw),
\end{split}
\end{equation*}
or in concise notations
\begin{equation}
\begin{split}
\dot c(x,v) =&h^2 \int [\phi_{atr}(v,w)c(x,w)\De c(x,v) +\phi_{atr}(w,v)c(x,v)\De
c(x,w)]M(dw)\\
\label{eqdifapprrepatr} &+h^2 \int \phi_{rep}(v,w)\De [c(x,w)c(x,v)]
M(dw).
\end{split}
\end{equation}
The 'repulsion' part (with vanishing $\phi_{atr}$) of this equation can also be written in the divergence form:
\begin{equation}
\label{eqdifapprrepatr1}
\dot c(x,v) =h^2 {\rm {div}} \int \phi_{rep}(v,w)\nabla [c(x,v)c(x,w)] M(dw).
\end{equation}

Generalizing, as above for the exchange mechanism, to more general free energy $F(c(.))$,
equation \eqref{eqdifapprrepatr} generalizes to

\begin{equation}
\begin{split}
\dot c(x,v) = &\int
 \left[\phi_{atr}(v,w)\De \frac{\de F}{\de c(x,v)} +\phi_{atr}(w,v)\De \frac{\de F}{\de c(x,w)}\right]
 \exp\left\{ \frac{\de F}{\de c(x,v)}+\frac{\de F}{\de c(x,w)}\right\} M(dw)\\
\label{eqdifapprrepatr2}& + \int \phi_{rep}(v,w) \De \exp\left\{ \frac{\de F}{\de
c(x,v)}+\frac{\de F}{\de c(x,w)}\right\} M(dw).
\end{split}
\end{equation}
where
\[
\phi_{atr} (v,w)= h^2 k_{atr}(v,w)\zeta^*(v,w),
\quad \phi_{rep} (v,w)= h^2 k_{rep}(v,w)\tilde \zeta^*(v,w).
\]

Similarly to the calculations with exchange mechanism above, we find the following law of
the evolution of $F$ due to the repulsion mechanism \eqref{eqdifapprrepatr2} (taking vanishing $\phi_{atr}$ in \eqref{eqdifapprrepatr2}):

\begin{equation}
\begin{split}
\dot F(c(.)) =&-\frac12 \int_{\R^d}\int_{V^2} \nabla \left[\frac{\de F}{\de
c(x,v)}\phi_{rep}(v,w)+\frac{\de F}{\de c(x,w)} \phi_{rep} (w,v)\right] \, dx
M(dv)M(dw)\\
\label{eqrepmechLyap1}
 &\times \nabla \left[\frac{\de F}{\de c(x,v)}+\frac{\de F}{\de c(x,w)} \right]
\exp\left\{ \frac{\de F}{\de c(x,v)} + \frac{\de F}{\de c(x,w)}\right\}.
\end{split}
\end{equation}
Hence, if the {\it detailed balance condition}
\[
\phi_{rep}(v,w)=\phi_{rep}(w,v)
\]
 holds (or equivalently $k_{rep}(v,w)=k_{rep}(w,v)$), then
\begin{equation}
\begin{split}
 \dot F(c(.)) =&-\frac12 \int_{\R^d\times V^2} \phi_{rep} (v,w)
 \left|\nabla  \left(\frac{\de F}{\de c(x,v)}+\frac{\de F}{\de c(x,w)} \right)\right|^2 \, dx
 M(dw)M(dv)\\
\label{eqecolairexcgh6gen} &\times \exp\left\{ \frac{\de F}{\de c(x,v)} + \frac{\de
F}{\de c(x,w)}\right\} ,
\end{split}
\end{equation}
which is clearly non-positive.

\subsection{Diffusion combined with other reactions}

Suppose that on the sites of the lattice the particles can react according to \eqref{eqMarkjumpdyncomp61},
though only pairs of particles can interact producing only two or three particles.
Suppose also the free energy is perfect leading to MAL
with all equilibrium concentration normalized to unity and that the simplest
 product measure $M(dv)M(dw)$ on $V^2$ can be used to measure the concentration of pairs.
Then the total dynamics comprising diffusion along the spatial variable
(including one-particle diffusion, exchange and repulsion-attraction mechanism) and reactions on the sites becomes
 \begin{equation}\label{eqMarkgendifreac}
 \begin{split}
 \dot c(x&,v)= K(v) {\rm {div}} \nabla c(x,v)+L^* c(x,v)\\
  &+ {\rm {div}} \int_V \Phi (v,w)  [\nabla c(x,v)c(x,w)-\nabla c(x,w) c(x,v)] M(dw)\\
 &+{\rm {div}} \int_V \Phi_{rep}(v,w) [\nabla c(x,v)c(x,w)+\nabla c(x,w) c(x,v)] M(dw)\\
 &+\int [\Phi_{atr}(v,w)c(x,w)\De c(x,v) +\Phi_{atr}(w,v)c(x,v)\De c(x,w)]M(dw)\\
 &+\int_V M (dw) \sum_{k=2}^{\infty}
\int_{SV^k}\left[\prod_{j=1}^k c(x,u_j) \tilde \nu (v,w, du_1 \cdots u_k)
 -c(x,v)c(x,w) \nu (v, w, du_1 \cdots du_k)\right],
 \end{split}
\end{equation}
where all differentiations act on the $x$ variable and $L^*$ acts on the second variable.

Let us stress that the mathematical difficulties in rigorous study of this type of equations in general are enormous.
In particular, this type includes the full classical Boltzmann equation, for which the well-posedness is a well known open
problem.

As a simple interesting example let us describe the case of only two types of particles, $V=\{A,B\}$,
such that the particles of the second type $B$ are immobile (in particular, there is no exchange)
 and act only as catalysis for the branching of $A$.
If the death rate of $A$ is $\Phi_d$, the corresponding evolution of the concentration of
$A$ (the concentration of $B$ does not evolve in time) becomes
\begin{equation}
\begin{split}
 \dot c_A(x)= &K \De c_A(x)-\Phi_d c_A(x)
  + \Phi_{rep} {\rm {div}} [c_B(x)\De c_A(x)+c_A(x) \De c_B(x)] \\
\label{eqMarkgendifreac1}
 &+\Phi_{atr}c_B(x)\De c_A(x) +
 \sum_{k=2}^{\infty} c_B(x)c_A(x)(c^{k-1}(x)-1) \nu_k.
 \end{split}
\end{equation}

Equations of that type are actively studied now in econophysics as models for economic and biological growth,
the solutions having quite peculiar properties, see
e.g. \cite{YNRS}.

\section{Conclusion}

We studied the Michaelis--Menten--Stueckelberg limit (Figure~\ref{MMSlimit}) and found
the general form of a nonlinear evolutions describing transformations of particles in
this limit  which combines QSS and QE assumptions about transformations of intermediates.

The resulting evolution can be considered as a far reaching extension to arbitrary state
spaces of the theory  developed by  Michaelis and Menten for the simple enzyme kinetic
and by Stueckelberg for Boltzmann's gas with collisions.  It is developed both for pure
jump underlying processes and for their diffusive limits.  It is shown that the
corresponding (generalized) free energy monotonically decreases  whenever the evolution
satisfies either the detailed balance condition  or more generally a complex balance (or
cyclic balance) condition. The complex balance conditions follows from the Markov
microkinetics in the Michaelis--Menten--Stueckelberg limit.

\section{Appendix}

\subsection{On pure-jump Markov processes}

Let $X$ be a locally compact metric space.
A generator of an arbitrary pure-jump Markov process (Markov chain) on $X$ has the form
\begin{equation}
\label{eqMarkjumpgen1}
Lf (x)=\int (f(y)-f(x)) \nu (x,dy)
\end{equation}
with a stochastic kernel $\nu$. The dual operator on measures is
\begin{equation}
\label{eqMarkjumpgen2}
L^*\mu (dx)=\int [\nu (y,dx) \mu (dy) -\nu (x,dy) \mu (dx)],
\end{equation}
so that the evolution of the distributions of the Markov process specified by $L$ is
\[
\dot \mu =L^* \mu.
\]
Let a Radon measure $M(dx)$ (i.e. a Borel measure with all compact sets having a finite measure) be chosen on $X$.
we say that a bounded measure $\mu$ has the concentration or the density-function $c\in L^1(M)$ if
$\mu$ is absolutely continuous with respect to $M$ with the Radon-Nikodyme derivative being $c$, that is
\[
\int_V \mu(dx) =\int_V c(x) M(dx)
\]
for any Borel set $V$.
In order to be able to restrict the evolution $\dot \mu =L^* \mu$ on measures with the densities,
we have to make the following assumption:

The projection of the measure $\nu (y,dx)M(dy)$ on $x$, that is the measure $\int_{y\in X}\nu (y,dx)M(dy)$ on $X$,
is absolutely continuous with respect
to $M$ or equivalently (by the disintegration of measure theory)
 there exists a stochastic kernel $\tilde \nu (x,dy)$ such that
\begin{equation}
\label{eqdisintmes1}
\nu (y,dx)M(dy)=\tilde \nu (x, dy)M(dx).
\end{equation}

If this is the case,
\begin{equation}
\label{eqMarkjumpgen3}
L^*[c(x) M (dx)]=\int [c(y)\tilde \nu (x, dy) M (dx) -\nu (x,dy) c(x) M (dx)],
\end{equation}
and the evolution equation $\dot \mu =L^* \mu$ in terms of the concentrations becomes
\begin{equation}
\label{eqMarkjumpgen4}
\dot c(x)=\int_{y\in X} [c(y)\tilde \nu (x, dy) -c(x) \nu (x,dy)].
\end{equation}

\begin{remark}
The dual to \eqref{eqMarkjumpgen3} equation on functions is
\begin{equation}
\label{eqMarkjumpgen5}
\dot f(x)=\int_{y\in X} [f(y) -f(x)] \nu (x,dy).
\end{equation}
Its well-posedness (implying the well-posedness for \eqref{eqMarkjumpgen4}) is investigated
under rather general conditions on possibly unbounded $\nu$ in
\cite{Ko04}, \cite{Ko06a} and \cite{Ko10}.
\end{remark}

More generally, if we have $n$ locally compact metric spaces $X_j$, $j=1, \cdots, k$,
a generator of an arbitrary pure-jump Markov process on the disjoint union of these $X_j$ has the form
\begin{equation}
\label{eqMarkjumpgenmul1}
(Lf)_j (x_j)=\sum_{l=1}^k \int (f_l(y_l)-f_j(x_j)) \nu_{j\to l} (x_j,dy_l)
\end{equation}
with some stochastic kernels $\nu_{j\to l}$. The dual operator on measures becomes
\begin{equation}
\label{eqMarkjumpgenmul2}
(L^*\mu)_j (dx_j)=\sum_{l=1}^k \int [\nu_{l\to j} (y_l,dx_j) \mu (dy_l) -\nu_{j\to l} (x_j,dy_l) \mu (dx_j)].
\end{equation}
Extending \eqref{eqdisintmes1} we assume that
\begin{equation}
\label{eqdisintmesmul1}
\nu_{l\to j} (y_l,dx_j)M(dy_l)=\tilde \nu_{j\to l} (x_j, dy_l)M(dx_j).
\end{equation}

In this case the evolution of the distributions  $\dot \mu =L^* \mu$ with $\mu =(\mu_1, \cdots , \mu _k)$
can be restricted to the concentrations yielding the evolution
\begin{equation}
\label{eqMarkjumpgenmul3}
\dot c_j(x_j)=\sum_{l=1}^k \int_{X_l} [c_l(y_l)\tilde \nu_{j\to l} (x_j, dy_l) -c_j(x_j)\nu_{j\to l} (x_j,dy_l)].
\end{equation}
In the simplest case when all $\nu_{j\to l}(x_j, dy_l)$ have densities $\nu_{j\to l}(x_j, y_l)$ with respect to $M_l$,
\eqref{eqMarkjumpgenmul3} turns to
\begin{equation}
\label{eqMarkjumpgenmul4}
\dot c_j(x_j)=\sum_{l=1}^k \int_{X_l} [c_l(y_l)\nu_{l\to j} (y_l, x_j) - c_j(x_j) \nu_{j\to l} (x_j,y_l)] M(dy_l).
\end{equation}

Of course evolution \eqref{eqMarkjumpgenmul3} can be considered as a particular case of \eqref{eqMarkjumpgen4} if
$X$ is taken to be the disjoint union of spaces $X_j$.

Recall now that the concentration $c^*(x)$ is called an equilibrium for system \eqref{eqMarkjumpgen4}, if
\begin{equation}
\label{eqMarkjumpgeneq1}
\int_{y\in X} [c^*(y)\tilde \nu (x, dy) -c^*(x) \nu (x,dy)]=0.
\end{equation}
If this is the case, and assuming $c^*(x)>0$ everywhere, equation \eqref{eqMarkjumpgen4} rewrites equivalently as
\begin{equation}
\label{eqMarkjumpgen3witheq}
\dot c(x)=\int_{y\in X} c^*(y)\left[\frac{c(y)}{c^*(y)}-\frac{c(x)}{c^*(x)}\right] \tilde \nu (x, dy).
\end{equation}

For a convex smooth function $h(x)$ let us introduce the 'generalized entropy' function
\begin{equation}
\label{eqdefgenent}
H_h(c \| c^*)=\int  c^*(x) h\left( \frac{c (x)}{c^*(x)}\right) M(dx).
\end{equation}
Assuming that $c$ evolves according to  \eqref{eqMarkjumpgen3witheq} and that all integrals below are
 well defined, it follows that
\begin{equation}
\label{eqdefgenent1}
\frac{d}{dt}H_h(c \| c^*)=\int \int  h'\left( \frac{c (x)}{c^*(x)}\right)
 c^*(y)\left[\frac{c(y)}{c^*(y)}-\frac{c(x)}{c^*(x)}\right] \tilde \nu (x, dy) M(dx).
\end{equation}

Generalizing the concepts from the theory of Markov chains let us introduce the graph $(X,E)$
associated with evolution \eqref{eqMarkjumpgen3} such that the set of vertices $X$ coincides
with the state space $X$ and the edge $(x \to y)$ exists if the point $y$ belongs to the support
of the measure $\nu(x, dy)$. As usual, we say that the finite sequence $(y_0, y_1, \cdots , y_k)$ is a path
in this graph joining $y_0$ and $y_k$ if the edges $(y_{j-1} \to y_j)$ exist for all $j=1, \cdots, k$;
and that the graph is strongly connected if
for any pair of points $(y_0,y)$ there exist paths joining $y_0$ and $y$.

The following result is the extension of the
Morimoto H-theorem of finite state-space Markov chains to the continuous state-space:
\begin{proposition} \label{MorimotoHtheoremext} Under evolution \eqref{eqMarkjumpgen3witheq}, and assuming $c^*(x)>0$ everywhere,
\begin{equation}
\label{eq1MorimotoHtheorem}
\frac{d H_h(c \|c^*)}{dt} \le 0.
\end{equation}
Moreover, if the measure $M(dx)$ has the full support and the graph $(X,E)$ introduced above
is strongly connected, then the equality in \eqref{eq1MorimotoHtheorem} holds if and only if
the ratio $c(x)/c^*(x)$ is a constant.
\end{proposition}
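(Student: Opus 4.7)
The plan is to work from the derivative formula $\dot H_h=\int h'(u(x))\dot c(x) M(dx)$ with $u(x):=c(x)/c^*(x)$, substitute the equilibrium form \eqref{eqMarkjumpgen3witheq} of the dynamics, use convexity of $h$ to bound the integrand, and exploit the disintegration identity together with the equilibrium condition to show that the resulting bound vanishes.

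First I would substitute $\dot c(x)=\int c^*(y)[u(y)-u(x)]\tilde\nu(x,dy)$ into the derivative of $H_h$ to obtain
\[
\dot H_h(c\|c^*)=\int\int h'(u(x))\,c^*(y)\,[u(y)-u(x)]\,\tilde\nu(x,dy)M(dx),
\]
which is exactly \eqref{eqdefgenent1}. Convexity of $h$ gives the pointwise inequality $h'(u(x))[u(y)-u(x)]\le h(u(y))-h(u(x))$, and integrating against the positive measure $c^*(y)\tilde\nu(x,dy)M(dx)$ yields $\dot H_h\le I_1-I_2$, where $I_1:=\int\int h(u(y))c^*(y)\tilde\nu(x,dy)M(dx)$ and $I_2:=\int\int h(u(x))c^*(y)\tilde\nu(x,dy)M(dx)$.

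Next I would show $I_1=I_2$. For $I_1$, apply the disintegration $\tilde\nu(x,dy)M(dx)=\nu(y,dx)M(dy)$ from \eqref{eqdisintmes1} and integrate out $x$ to get $I_1=\int h(u(y))c^*(y)N(y)M(dy)$, where $N(y):=\int\nu(y,dx)$. For $I_2$, integrate out $y$ first and apply the equilibrium condition \eqref{eqMarkjumpgeneq1} rewritten as $\int c^*(y)\tilde\nu(x,dy)=c^*(x)N(x)$, giving $I_2=\int h(u(x))c^*(x)N(x)M(dx)$. Relabeling the dummy variable shows $I_1=I_2$, so $\dot H_h\le 0$, which proves \eqref{eq1MorimotoHtheorem}. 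The reverse direction of the equivalence is immediate: if $u$ is constant then $u(y)-u(x)\equiv 0$ inside the integral and $\dot H_h=0$.

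For the nontrivial direction of the equality case, $\dot H_h=0$ forces the pointwise convexity bound to be an equality for $c^*(y)\tilde\nu(x,dy)M(dx)$-almost every $(x,y)$. Since $c^*>0$ everywhere, strict convexity of $h$ on the range of $u$ gives $u(x)=u(y)$ on the support of $\nu(y,dx)M(dy)$; hence $u$ is constant along every edge of the graph $(X,E)$. Combining the full support of $M$ with strong connectedness, iterating along a path $y=x_0\to x_1\to\cdots\to x_k=x$ propagates the equality, so $u$ is constant on $X$. The main obstacle is the measure-theoretic precision of this last step: passing from an almost-everywhere equality on the edge measure to a pointwise statement that matches the combinatorial notion of connectedness. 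I would resolve this by invoking continuity of $u$ (inherited from the regularity of $c$ and $c^*$) together with the fact that the support of $\nu(y,dx)M(dy)$, under full support of $M$, genuinely realizes the edge set of $(X,E)$; when $h$ is only weakly convex one argues additionally that the values of $u$ along a path must all lie on a common affine piece of $h$, which combined with connectedness still forces $u$ to be constant.
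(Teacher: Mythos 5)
Your proposal is correct and follows essentially the same route as the paper: both reduce $\dot H_h$ to an integral of (minus) the Bregman divergence $h(u(y))-h(u(x))-h'(u(x))(u(y)-u(x))$ against the positive measure $c^*(y)\nu(y,dx)M(dy)$ by showing that the linear term integrates to zero, and both handle the equality case by propagating $u(x)=u(y)$ along edges via strong connectedness. The only cosmetic difference is that the paper obtains the vanishing of $\int\int c^*(y)\nu(y,dx)[f(y)-f(x)]M(dy)$ from mass conservation of the evolution applied to arbitrary $c$, whereas you derive the same identity directly from the disintegration relation \eqref{eqdisintmes1} together with the equilibrium condition \eqref{eqMarkjumpgeneq1}; these are equivalent, and your closing caveat about strict versus weak convexity of $h$ in the ``only if'' direction is a point the paper itself glosses over.
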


\begin{proof}
As it follows from \eqref{eqMarkjumpgen2}, $\int L^*\mu (dx)=0$ for all $\mu$. In terms of equation
\eqref{eqMarkjumpgen3witheq} this rewrites as
\[
0=\int \dot c (x) M(dx) =\int_X c^*(y)\tilde \nu (x, dy) \left[\frac{c(y)}{c^*(y)}-\frac{c(x)}{c^*(x)}\right] M(dx)
\]
\begin{equation}
\label{eq2MorimotoHtheorem}
=\int_X c^*(y)\nu (y, dx) \left[\frac{c(y)}{c^*(y)}-\frac{c(x)}{c^*(x)}\right] M(dy)
\end{equation}
for any $c(x)$. Consequently, for any function $f$ (such that the integral below is well defined),
\begin{equation}
\label{eq3MorimotoHtheorem}
\int_X c^*(y) \nu (y, dx) \left[f(y)-f(x)\right] M(dy)=0.
\end{equation}
This identity allows one to rewrite \eqref{eqdefgenent1} as
\[
\frac{d}{dt}H_h(c \| c^*)=\int \int c^*(y) \nu (y, dx) M(dy)
 \]
\begin{equation}
\label{eqdefgenent2}
 \times
 \left[ h \left(\frac{c(x)}{c^*(x)}\right)-h\left(\frac{c(y)}{c^*(y)}\right)
 +h'\left( \frac{c (x)}{c^*(x)}\right)\left(\frac{c(y)}{c^*(y)}-\frac{c(x)}{c^*(x)}\right)\right],
\end{equation}
implying \eqref{eq1MorimotoHtheorem} by the convexity of $h$.

Finally, assuming $M$ has full support, it follow that  the equality in \eqref{eq1MorimotoHtheorem} holds if and only if
\[
\int \nu (y, dx)
 \left[ h \left(\frac{c(x)}{c^*(x)}\right)-h\left(\frac{c(y)}{c^*(y)}\right)
 +h'\left( \frac{c (x)}{c^*(x)}\right)\left(\frac{c(y)}{c^*(y)}-\frac{c(x)}{c^*(x)}\right)\right]=0
\]
for all $y$. Hence by convexity, $\frac{c(x)}{c^*(x)}=\frac{c(y)}{c^*(y)}$ for all $x$
from the support of $\nu(y,.)$.
 The final conclusion follows from the assumed connectivity of $(X,E)$.
\end{proof}

\subsection{Linking the concentration of particles and of compounds}

\begin{proposition}
\label{kernstoich}
The kernels $\mu$ in \eqref{eqasscompbasmaes} can be chosen in such a way that
if $\zeta_k (x_1, \cdots , x_k)=\zeta_k (\x)$ is the concentration of the compounds
$\bar \x$ of size $k$, the concentration of particles involved in these compounds equals \eqref{eqasscompbasmaes2},
that is
\begin{equation}
\label{eqasscompbasmaes2rep}
c(x)=\int_{SX^{k-1}} \zeta_k (x, x_2, \cdots , x_k) \mu_k(x, dx_2 \cdots dx_k).
\end{equation}
\end{proposition}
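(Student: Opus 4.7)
The plan is to exhibit the disintegration kernel $\mu_k$ directly and then verify \eqref{eqasscompbasmaes2rep} by a symmetry argument. First I would observe that the content of \eqref{eqasscompbasmaes} is essentially a disintegration: the projection of the symmetric measure $M_k$ on any single coordinate is, by symmetry, independent of which coordinate one picks. Call this common marginal $\pi_k$. The assumption of the theorem is precisely that $\pi_k$ is absolutely continuous with respect to $M$, with Radon-Nikodym derivative $1/k$ (the $1/k$ is a normalization convention so that $\mu_k(x,\cdot)$ comes out stochastic). Applying the standard disintegration theorem on a locally compact metric space to $M_k$ against its projection on $x_1$ produces a stochastic kernel $\mu_k(x_1, dx_2 \cdots dx_k)$ satisfying \eqref{eqasscompbasmaes}, and the symmetry of $M_k$ guarantees that the resulting disintegration is symmetric in $(x_2,\ldots,x_k)$ (one can symmetrize if necessary). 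The identity \eqref{eqasscompbasmaes1} for any $j$ then follows from the symmetry of $M_k$ combined with uniqueness (a.e.) of disintegration kernels.

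With such a $\mu_k$ in hand, I would define the measure on $X$ of particles involved in compounds of size $k$ by
\[
\nu_k(B)=\int_{SX^k}\zeta_k(\x)\,\#\{i:x_i\in B\}\,M_k(d\x)=\sum_{i=1}^{k}\int_{SX^k}\zeta_k(\x)\mathbf{1}_B(x_i)\,M_k(d\x),
\]
for any Borel $B\subset X$. Using the symmetry of both $\zeta_k$ and $M_k$, each of the $k$ summands coincides with the $i=1$ summand, so
\[
\nu_k(B)=k\int_{SX^k}\zeta_k(\x)\mathbf{1}_B(x_1)\,M_k(d\x).
\]
Substituting \eqref{eqasscompbasmaes} makes the factor $k$ cancel against $1/k$, yielding
\[
\nu_k(B)=\int_X \mathbf{1}_B(x)\left[\int_{SX^{k-1}}\zeta_k(x,x_2,\ldots,x_k)\,\mu_k(x,dx_2\cdots dx_k)\right]M(dx).
\]
Reading off the density of $\nu_k$ with respect to $M$ gives \eqref{eqasscompbasmaes2rep}.

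The only delicate step is the first one: ensuring that a single kernel $\mu_k$ can be chosen which simultaneously disintegrates $M_k$ against every coordinate projection in the sense of \eqref{eqasscompbasmaes1}. This is where the symmetry of $M_k$ is doing essential work; otherwise disintegration only gives a kernel adapted to one distinguished coordinate. I expect this symmetrization of the disintegration kernel to be the main technical point, while the particle-count identity itself is then an entirely routine application of Fubini and the factor-of-$k$ accounting that motivated the normalization in \eqref{eqasscompbasmaes}.
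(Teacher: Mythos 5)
Your argument is correct, but it takes a genuinely different route from the paper. The paper's proof is constructive: it decomposes $M_k$ according to the coincidence pattern of the coordinates (diagonal versus off-diagonal parts for $k=2$; diagonal, off-diagonal and ``two-coincide'' parts for $k=3$; general Young schemes via Proposition \ref{propstructmessymm}) and assembles $\mu_k$ piece by piece, inserting the multiplicity factors by hand (e.g.\ the term $2\om(x)\de(x-y)$ in $\mu_2$, and the weights $2\mu_{21}(x,dy)\de(z-x)+\mu_{12}(x,dy)\de(z-y)$ for $k=3$), which is what lets the authors interpret these weights as stoichiometric coefficients. You instead obtain $\mu_k$ abstractly from the disintegration theorem applied to the symmetric lift of $M_k$, and you encode the multiplicities not in the kernel but in the counting measure $\nu_k(B)=\int\zeta_k(\x)\,\#\{i:x_i\in B\}\,M_k(d\x)$; the identity $\sum_i\int\zeta_k\mathbf{1}_B(x_i)=k\int\zeta_k\mathbf{1}_B(x_1)$ then does exactly the bookkeeping that the paper performs explicitly (on the diagonal both indicators fire, reproducing the factor $2$), and the $1/k$ in \eqref{eqasscompbasmaes} cancels as intended. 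Your route is more uniform in $k$ --- the paper treats only $k=2,3$ in detail and defers the general case to the structure theory of measures on $SX^k$ --- while the paper's route makes the stoichiometric interpretation of the kernel explicit, which is exploited conceptually later. Two small corrections: the Radon--Nikodym derivative of the one-coordinate marginal with respect to $M$ is $\tfrac1k\mu_k(x,SX^{k-1})$, not $\tfrac1k$; the kernels $\mu_k$ produced here (and used throughout the paper, e.g.\ $\mu(x,dy)=M(dy)$) are not normalized to total mass one, so the $1/k$ is not a normalization making $\mu_k$ stochastic but, as the paper states, the factor that removes coefficients from \eqref{eqasscompbasmaes2}. Also, when you ``substitute \eqref{eqasscompbasmaes}'' into an integral of the non-symmetric function $\zeta_k(\x)\mathbf{1}_B(x_1)$, you should make explicit that \eqref{eqasscompbasmaes} is being read as an identity of measures on $X\times SX^{k-1}$ (symmetric only in the last $k-1$ arguments), not merely on $SX^k$; this is consistent with how the paper uses it, but it is the one point where your Fubini step needs the lift to be specified.
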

\begin{proof}
Let firstly $k=2$. The arbitrary measure $M_2$ on $SX^2$ can be given by the pair of measures $M^d$ and $M^{nd}$
(the subscripts $d$ and $nd$ stand for diagonal and non-diagonal parts), where $M^d$ is a measure on
the diagonal $D=\{(x,x): x\in X\}$ and $M^{nd}$ is a symmetric measure on $X^2\setminus D$, so that,
for a symmetric function $f$,
\begin{equation}
\label{eq1kernstoich}
\int_{SX^2} f(x,y)M_2(dx dy) =\frac12 \int_{X^2} f(x,y) M^{nd}(dx dy) +\int f(x,x) M^d(dx).
\end{equation}
Assuming that $M_2$ has absolutely continuous (with respect to $M$) projections on $X$ means that there exist
a kernel $\mu^{nd}(x,dy)$ with $\mu^{nd}(x, \{x\})=0$ and a function $\om (x)$ such that
\[
M^{nd} (dx dy) =M(dx) \mu^{nd}(x, dy), \quad M^d(dx) =\om (x) M(dx).
\]
Then clearly \eqref{eq1kernstoich} becomes
\begin{equation}
\label{eq2kernstoich}
\int_{SX^2} f(x,y)M_2(dx dy) =\int_X \frac12 \left[ \int f(x,y) \mu_2(x,dy)\right] M^d(dx)
\end{equation}
with
\[
\mu_2(x,dy)= \mu^{nd} (x,dy)+2\om (x) \de (x-y).
\]
Moreover, the amount of particles in a neighborhood $dx$ of a point $x$ entering the compounds is
\begin{equation*}
\begin{split}
\int_{dx} \int_X& \zeta(x,y) M^{nd} (dx dy) +2\int_{dx} \zeta (x,x) M^d(dx)\\
 &=\int_{dx} M(dx) \left[\int_X \zeta (x,y) \mu^{nd} (x,dy)+2\zeta(x,x)\om (x)\right]
\end{split}
\end{equation*}
(a particle at $x$ is used twice in the compound $\zeta(x,x)$, hence the coefficient $2$ at the second term).
Hence the concentration, which is the density with respect to $M(dx)$ is
\[
c(x)=\int_X\zeta (x,y) \mu^{nd} (x,dy)+2\zeta(x,x)\om (x)=\int \zeta (x,y)\mu_2(x, dy),
\]
as required.

Now let $k=3$. Then an arbitrary measure $M_3$ on $SX^3$ can be given by the triple $M^d$, $M^{nd}$ and $M^{int}$,
where $M^d$ is a measure on
the diagonal $D^3=\{(x,x,x): x\in X\}$,
$M^{nd}$ is a symmetric measure on $X^3\setminus D^{23}$, where
\[
D^{23}=\{(x_1,x_2,x_3): \exists i,j: x_i=x_j \},
\]
and $M^{int}$ is a measure on $X^2$ (not necessarily symmetric, that counts the triples $(x,x,y)$ with $y\neq x$)
 so that for a symmetric function $f$,
\begin{equation}
\begin{split}
 \int_{SX^3} f(x_1,x_2, x_3) M_3(dx_1 dx_2 dx_3)
 =&\frac16 \int_{X^3} f(x_1, x_2, x_3) M^{nd}(dx_1 dx_2 dx_3)\\
\label{eq3kernstoich}  +&\int_X f(x,x,x) M^d(dx)+\int_{(X\times X)\setminus D} f(x,x,y)
M^{int}(dx dy).
\end{split}
\end{equation}
Assuming that $M_2$ has absolutely continuous (with respect to $M$) projections on $X$ implies that
all three measures above have this property and the proof of the statement can be performed separately for each of them.
For $M^{nd}$ and $M^d$ it is literally the same as for the case $k=2$. Let us consider a more subtle case
of the measure $M^{int}$. Denoting by $\mu_{12}$ and $\mu_{21}$ the kernels arising from the projections
of $M^{int}$ on the first and the second coordinate (note that they are not symmetric, as the first coordinate
describes the pairs of identical particles), we have
\[
M^{int}(dx dy)= M(dx) \mu_{21}(x,dy)=  M(dy)\mu_{12}(y,dx)
\]
and therefore also
\[
M^{int}(dx dy)= \frac23M(dx) \mu_{21}(x,dy)+\frac13 M(dy)\mu_{12}(y,dx).
\]
Consequently, defining the kernel
\begin{equation}
\label{eq4kernstoich}
\mu(x,dy\, dz)=2\mu_{21}(x,dy)\de (z-x) +\mu_{12}(x,dy) \de (z-y),
\end{equation}
allows one to write
\[
\int_{X\times X\setminus D} f(x,x,y) M^{int}(dx dy)
=\frac13 \int_X \left[\int_{SX^2}f(x,y,z) \, \mu(x, dy\, dz)\right]  M(dx).
\]
Moreover, the amount of particles in a neighborhood $dx$ of a point $x$ entering the compounds
containing precisely two identical particles equals
\[
\int_{dx} \int_X \zeta(x,y,y) M^{int} (dy dx) +2\int_{dx} \int_X \zeta (x,x,y) M^{int}(dx dy)
\]
\[
=\int_{dx} M(dx) \left[\int_X \zeta (x,y,y) \mu_{12} (x,dy)+2\int_X \zeta(x,x,y)\mu_{21} (x,dy)\right].
\]
Hence the concentration, which is the density with respect to $M(dx)$ is
\[
c(x)=\int_{SX^2} \zeta (x,y,z)\mu (x, dy \, dz),
\]
as required.

Larger $k$ are analyzed similarly, but requires understanding of the structure of measures on $SX^k$
discussed below.
\end{proof}

Recall that a partition of a natural number $k$ is defined as its representation
as a sum of non-vanishing terms (with the order of terms irrelevant), i.e. as
\begin{equation}
\label{eqdefparti}
k=N_1+2N_2 +\cdots +jN_j
\end{equation}
with a $j>0$, where $N_l$ is the number of terms in the sum that equal $l$. Graphically these partitions are described by the
so-called Young schemes. For a partition (or a Young scheme) \eqref{eqdefparti} let us defined the extended diagonal
$D^{N_1, \cdots ,N_j}$ as a subset of the product $X^{N_1+ \cdots +N_j}$ such that at least two of the coordinates
$(x_1, x_2, \cdots , x_{N_1+\cdots + N_j})$ coincide. The following fact is then more or less straightforward.

\begin{proposition}
\label{propstructmessymm}
An arbitrary Borel measure $M_S$ on $SX^k$ can be uniquely specified by a collection of measures
$M^{N_1, \cdots ,N_j}$ on $X^{N_1+ \cdots +N_j}\setminus D^{N_1, \cdots ,N_j}$ which are symmetric for permutations
inside the group of arguments in each $X^{N_l}$ and which are parametrized by all partitions
 \eqref{eqdefparti}, so that for a symmetric function $f$ on $X^k$
\begin{equation}
\begin{split}
\int_{SX^k} f(x) M_S (dx) =&\sum_{N_1, \cdots, N_j}\frac{1}{N_1! \cdots N_j!}
 \int_{X^{N_1+ \cdots +N_j}\setminus D^{N_1, \cdots ,N_j}} M^{N_1, \cdots ,N_j}(dx_1 \cdots dx_{N_1+\cdots
 +N_j})\\
\label{eqdefmesonsym1} &\times f(x_1, \cdots, x_{N_1},
 \cdots , \underbrace{x_{N_1+\cdots +N_{l-1}+m}, \cdots , x_{N_1+\cdots +N_{l-1}+m}}_{l {\text times}}, \cdots)
 \end{split}
\end{equation}
(the arguments coincide in each group entering the partition),
the sum being over all partitions
 \eqref{eqdefparti} of $k$.
If, additionally, the projection of $M_S$ on $X$ is absolutely continuous with respect to a measure $M(dx)$, that is each measure
$M^{N_1, \cdots ,N_j}$ is absolutely continuous with respect to each arguments, then it can be presented in
$N_1+\cdots +N_j$
equivalent forms:
\[
M^{N_1, \cdots ,N_j}(dx_1 \cdots  dx_{N_1+\cdots +N_j})=M(dx_{N_1+\cdots +N_{l-1}+m})
\]
\begin{equation}
\label{eqdefmesonsym2}
 \mu_l^{N_1, \cdots ,N_j} (x_{N_1+\cdots +N_{l-1}+m}, dx_1  \cdots  d\check{x}_{N_1+\cdots +N_{l-1}+m}
 \cdots  dx_{N_1+\cdots +N_j}),
\end{equation}
where $\check{x}_p$ denotes, as usual, the absence of $x_p$ in the sequence of arguments,
$\mu_l$ are some stochastic kernels and $m \in \{1, \cdots N_l\}$, or more symmetrically as
\[
M^{N_1, \cdots ,N_j}(dx_1 \cdots  dx_{N_1+\cdots +N_j})=\sum_{l=1}^j \frac{l}{k}\sum_{m=1}^{N_l}
 M(dx_{N_1+\cdots +N_{l-1}+m})
\]
\begin{equation}
\label{eqdefmesonsym3}
\mu_l^{N_1, \cdots ,N_j} (x_{N_1+\cdots +N_{l-1}+m}, dx_1  \cdots  d\check{x}_{N_1+\cdots +N_{l-1}+m} \cdots  dx_{N_1+\cdots +N_j}).
\end{equation}
\end{proposition}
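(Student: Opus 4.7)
The plan is to prove the decomposition by a stratification of $SX^k$ according to the multiplicity type of points, followed by disintegration on each stratum.

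For the first claim, I would fix a partition $\pi$ of $k$ of the form \eqref{eqdefparti} and let $S_\pi X^k \subset SX^k$ be the locus of unordered tuples in which exactly $N_l$ distinct points of $X$ occur with multiplicity $l$, for each $l$. These strata are pairwise disjoint and exhaust $SX^k$. I would then construct a surjection
\[
\Psi_\pi : X^{N_1+\cdots+N_j}\setminus D^{N_1,\cdots,N_j} \to S_\pi X^k
\]
which sends $(x_1,\ldots,x_{N_1+\cdots+N_j})$ to the unordered tuple in which the first $N_1$ arguments appear once each, the next $N_2$ appear twice each, etc. Permutations inside each block of size $N_l$ preserve $\Psi_\pi$, and they are the only symmetries that do so, since the blocks are distinguished by multiplicity and the arguments inside each block are forced to be distinct by the removal of $D^{N_1,\cdots,N_j}$. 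Thus $\Psi_\pi$ is exactly $N_1!\cdots N_j!$ to one. Defining $M^{N_1,\cdots,N_j}$ as the symmetrized pullback of $M_S|_{S_\pi X^k}$ under $\Psi_\pi$ gives \eqref{eqdefmesonsym1} with the correct combinatorial factor, and uniqueness is immediate from the disjointness of the strata.

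For the absolute continuity part, I would first argue that once the projection of $M_S$ on $X$ (counted with multiplicity) is absolutely continuous w.r.t.\ $M$, every projection of every $M^\pi$ onto an individual coordinate is absolutely continuous w.r.t.\ $M$ as well. Invoking the disintegration-of-measure theorem, already used for \eqref{eqasscompbasmaes}, I obtain, for each coordinate position $N_1+\cdots+N_{l-1}+m$, a stochastic kernel $\mu_l^{N_1,\cdots,N_j}$ yielding the asymmetric representation \eqref{eqdefmesonsym2}. The symmetric representation \eqref{eqdefmesonsym3} is then obtained by averaging these $N_1+\cdots+N_j$ equivalent disintegrations with weights $l/(kN_l)$, so that coordinates in the $l$-th block carry total weight $l/k$. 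The weights $l/k$ are forced by the requirement that the projection on particles (not on parametrizing coordinates) agree with $M(dx)$: a coordinate in the $l$-th block represents $l$ identical particles, so the $lN_l$ particles in the block collectively account for $lN_l/k$ of the total particle mass $k$, and dividing by $N_l$ (the number of coordinates in the block) gives $l/k$ per coordinate.

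The main technical obstacle is the combinatorial bookkeeping in \eqref{eqdefmesonsym3}, and in particular verifying that the weights $l/k$ give the correct normalization once one keeps track of the difference between ``coordinates in the parametrization'' and ``physical particles.'' The proof of Proposition~\ref{kernstoich} already carries this out explicitly for the partitions of $k=2$ and $k=3$ (the cases $(N_1,N_2)=(2,0),(0,1)$ and $(N_1,N_2,N_3)=(3,0,0),(1,1,0),(0,0,1)$), and the general case follows the same template; the induction is essentially a matter of recording the multinomial coefficients attached to each stratum. Once the combinatorics is in place, the identities \eqref{eqdefmesonsym1}--\eqref{eqdefmesonsym3} follow from symmetry of the integrand in each block and the standard disintegration argument.
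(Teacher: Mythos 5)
Your stratification of $SX^k$ by multiplicity type, the $N_1!\cdots N_j!$-to-one parametrization of each stratum, and the coordinate-wise disintegration are exactly the route the paper intends: it states Proposition~\ref{propstructmessymm} without proof, calling it ``more or less straightforward,'' and the only written-out instances are the $k=2$ and $k=3$ computations inside the proof of Proposition~\ref{kernstoich}, which your argument generalizes correctly. One slip to fix: the per-coordinate weights in \eqref{eqdefmesonsym3} are $l/k$, not $l/(kN_l)$ as you first write --- each of the $N_l$ terms in block $l$ is an identical copy of $M^{N_1,\cdots,N_j}$ by \eqref{eqdefmesonsym2}, so the normalization requires $\sum_l N_l\cdot(l/k)=\tfrac{1}{k}\sum_l lN_l=1$, and the closing sentence of that paragraph (deriving $l/k$ per coordinate from the particle count $lN_l$) is the correct version and supersedes the earlier one.
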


The numerators $l$ in \eqref{eqdefmesonsym3} reflect the number of identical particles entering a compound, thus presenting the analogs
of stoichiometric coefficients.

\end{document}